\documentclass[reprint, aps, pra, superscriptaddress, amsmath, amssymb]{revtex4-2}
\usepackage{physics}
\usepackage[margin=1in]{geometry}
\usepackage{booktabs}
\usepackage{cancel}
\usepackage{xcolor}
\usepackage{mdframed}
\usepackage{verbatim}
\usepackage{amsthm}
\usepackage{graphicx}
\usepackage{float}
\newtheorem{theorem}{Theorem}
\newtheorem*{example*}{Example}
\newtheorem{definition}{Definition}

\newtheorem{lemma}{Lemma}
\newcommand{\Hd}{\mathcal{H}_d}

\newcommand{\CC}{\mathbb{C}}

\newcommand{\Xop}{\mathbf{X}}
\newcommand{\Zop}{\mathbf{Z}}
\newcommand{\Wop}[2]{W(#1,#2)}
\newcommand{\om}{\omega}
\newcommand{\ZZ}{\mathbb{Z}}
\newcommand{\tensn}[1]{{\bigotimes_{i=1}^{n}} #1^{(S_i)}}
\newcommand{\sx}{\sigma_x}
\newcommand{\sy}{\sigma_y}
\newcommand{\sz}{\sigma_z}

\begin{document}

\title{Encrypted Cloning, Absolute Maximal Entanglement and Quantum Secret Sharing}

\author{Zheng Liang Lim}
\email{e0001040@u.nus.edu} 
\affiliation{Centre for Quantum Technologies, National University of Singapore, 3 Science Drive 2, 117543, Singapore}
\affiliation{Department of Physics, National University of Singapore Computing 1 13 Computing Drive Singapore 117417, Singapore}

\author{Hoi-Kwong Lo}
\thanks{hoikwonglo@nus.edu.sg} 
\affiliation{Centre for Quantum Technologies, National University of Singapore, 3 Science Drive 2, 117543, Singapore}
\affiliation{Department of Physics, National University of Singapore, 2 Science Drive 3, Singapore 117551, Singapore}
\affiliation{Quantum Bridge Technologies Inc., 108 College St., Toronto, ON, Canada}
\affiliation{Department of Electrical and Computer Engineering,
University of Toronto, 10 King’s College Road, Toronto, ON, Canada}

\begin{abstract}
The no-cloning theorem prohibits the creation of identical copies of quantum information, imposing fundamental constraints on quantum technologies. A recently proposed protocol, encrypted cloning, introduced by Yamaguchi and Kempf, showed that perfect qubit clones can be produced if they are simultaneously encrypted with a single-use key. They also observed a connection between this scheme and quantum secret sharing (QSS). However, it remained an open question whether encrypted cloning could be generalised to arbitrary dimensions, and the broader relationship between the two schemes had not been formally established.
In this work, we address both questions by framing encrypted clones as Absolutely Maximally Entangled (AME) states. In parallel with recent work by Ceará that utilises Zadoff-Chu sequences, we independently develop a complementary framework for arbitrary dimensions based on Weyl-Heisenberg displacement operators, both tracing back to the original qubit construction by Yamaguchi and Kempf. We analytically compute the encrypted state and prove that an encrypted qudit system comprising two signal-noise qudit pairs is equivalent to a five-party AME state in any dimension, provided the input state is uniform. We then formalise the connection to QSS by proving that a threshold QSS scheme can achieve the fundamental objectives of encrypted cloning, establishing QSS as the natural general framework within which encrypted cloning can be contextualised.
\end{abstract}

\maketitle

\section{Introduction}
To address the practical bottlenecks created by the no-cloning theorem $\cite{Wootters1982,D.Dieks}$, which prohibits the duplication of quantum information, Yamaguchi and Kempf $\cite{Kempf1}$ proposed a protocol that enables the cloning of encrypted states. This paradigm serves as more than a theoretical curiosity; it naturally enables quantum multicloud storage and offers a promising path toward blind quantum computing by extending the architecture to support parallel homomorphic computation across multiple untrusted clouds~\cite{JF, YF}. Furthermore, the protocol provides enhanced robustness in signal transmission. 

Yamaguchi and Kempf demonstrated that any number of encrypted clones of a qubit can be created via a unitary transformation, and that each clone can subsequently be decrypted through another unitary operation. In their protocol, the system is modelled as an input state accompanied by auxiliary pairs of signal and noise qubits; here, the signal qubits act as potential clones of the input state, while each noise qubit serves as a fixed decryption key. While their work established this architecture for dimension $d=2$, it leaves a research gap regarding the formal algebraic conditions for state recovery in arbitrary dimensions. Furthermore, they observed a relationship between encrypted cloning and quantum secret sharing (QSS); a protocol in which a dealer encodes a secret $S$ in a quantum state that is shared among $n$ players in such a way that only special subsets of players are able to recover the secret~\cite{Cleve, DGottesman}. However, the general framework relating quantum secret sharing (QSS) to encrypted cloning had not been formally contextualised. 

Moreover, at the time of this writing, Ceará has addressed the former open problem by generalising the encrypted cloning protocol to arbitrary dimensions. Notably, he explored Zadoff-Chu sequences for the proposed encryption operators~\cite{Ceara} and analysed their circuit implementations. In this work, we independently derive a distinct framework of unitary operators for encryption and decryption using Weyl-Heisenberg displacement operators. We note that the security requirements for encrypted cloning are inherently more limited than those for a general $((k, n))$ threshold QSS scheme. While the two protocols are closely related, a QSS scheme imposes a stronger requirement. In encrypted cloning, the decryption key is restricted to a fixed set of noise qudits, whereas QSS demands that the secret be recoverable from any valid subset of shares. Therefore, there exist threshold QSS schemes that can be used to achieve the access structure of encrypted cloning, while the converse is not true. Consequently, the noise qudit-based decryption in encrypted cloning represents a specific, constrained case of the more flexible reconstruction process inherent to QSS.

Despite these differences in requirements, we establish that an encrypted system of two signal and noise qudit pairs is mathematically identical to a five-party absolutely maximally entangled (AME) state provided that the input state to be cloned is uniform. AME states, denoted as $AME(n', d)$, are pure $n'$-party states where every reduced density matrix of size $k' \leq \lfloor n'/2 \rfloor$ is maximally mixed~\cite{AMEQSS}. Such states represent the most extremal form of multipartite entanglement, ensuring that any observer holding fewer than half of the parties possesses zero information about the global state. This correspondence holds for arbitrary dimensions, exactly two signal and noise qudit pairs and allows us to reinterpret encrypted cloning through the lens of quantum secret sharing \footnote{The number of signal and noise qudit pairs $n$ has to be more than 1, as when $n=1$, the encrypted clone cannot be decrypted, as shown in the supplementary materials of ~\cite{Kempf1}.}. By leveraging known QSS frameworks, we rigorously formalise the relationship between encrypted cloning and quantum secret sharing, thereby closing the latter research problem.

The paper is organised as follows. In section II, we review the results given by Yamaguchi and Kempf for encrypted cloning of qubit states and present an overview of the Weyl-Heisenberg operators that will be used to construct our encryption and decryption unitaries. In addition, we also present the construction and application of AME states, followed by an overview of Quantum Secret Sharing and the relationship between AME states and Quantum Secret Sharing. In section III, we propose the encryption and decryption unitary operators for encrypted cloning in arbitrary dimension and examine their validity. Then in section IV, we demonstrate that an encrypted qudit state for any dimension, comprising two pairs of signal and noise qudits, corresponds to a five-party AME state. In section V, we formalise the connection betweenn between encrypted cloning to quantum secret sharing and showed the existence of a protocol that enables the decryption of the cloned qudits even if all the decryption keys are lost. Finally, section VI concludes the paper with a summary and a discussion of future outlooks and directions. The proofs of the statements used in this work are presented in the appendices.
\section{Background}

We begin by reviewing the encrypted cloning protocol for qubits proposed by Yamaguchi and Kempf, before establishing the notation and properties of Weyl-Heisenberg operators. We then demonstrate the connection between states that are maximally entangled across bipartitions and multipartite secret sharing schemes, where information is accessible only through specific authorised subsets.

\subsection{Encrypted Cloning for qubit states}

The original protocol begins with the preparation of $n$ pairs of maximally entangled qubits $(S_i, N_i)$, where $i \in \{1, \dots, n\}$ \cite{Kempf1}. Here, $S_i$ and $N_i$ denote the signal and noise qubits, respectively, each initialised in the Bell state $\ket{\Phi^+}_{S_i N_i} = \frac{1}{\sqrt{2}}(\ket{00} + \ket{11})$. Including the target qubit $\ket{\psi}_A = \alpha\ket{0} + \beta\ket{1}$ to be cloned, the initial state of the full system is given by:
\begin{equation}
\ket{\Psi_0} = \ket{\psi}_A \otimes \bigotimes^n_{i=1} \ket{\Phi^+}_{S_i N_i}.
\end{equation}
Subsequently, the register $A$ interacts with the set of signal qubits $\{S_i\}$ via an encryption unitary $U^{(d,n)}_{enc}$ given by:

\begin{equation}
U^{(2,n)}_{enc}=\frac{1}{2}\sum^3_{\mu=0}\alpha^{-1}_{\mu}\sigma^{(A)}_{\mu}\otimes \bigotimes^n_{i=1}\sigma^{(S_i)}_{\mu}
\end{equation}

such that the noise qubits are not involved in the encryption, as the unitary operator acts as an identity operator in the Hilbert space of $N_i$. Here, $\sigma_0=\mathbb{I}$, $\{\sigma_1,\sigma_2,\sigma_3\}=\{\sigma_x,\sigma_y,\sigma_z\}$ are the set of Pauli matrices, and the $\alpha_\mu$ are normalised coefficients where $|\alpha_{\mu}|^2 = 1$, $\alpha_0=1, \alpha_1=\alpha_3=i, \alpha_2=-(i)^{(n+1)}$. The unitarity of the encryption operator $U^{(2,n)}_{enc}$ is maintained by the phase factors $\alpha_{\mu}$, which are defined such that they facilitate the necessary cancellations in the expansion of $U^{(2,n)}_{enc} U^{(2,n)\dagger}_{enc}$ to give the identity operator. This results in the reduced state of the registers involved in the encryption process i.e. $\rho_A$ and $\rho_{S_i}$ to be maximally mixed. Because a maximally mixed state possesses zero information regarding the basis in which the data was encoded, the data remains perfectly hidden from any external observer who lacks access to the noise qudit required for decryption. After encryption, each signal qubit is regarded as encrypted clone and there are $n > 1$ signal qubits and so there are $n > 1$ encrypted clones. The noise qubits are treated as the decryption keys that are held locally in a quantum memory by the honest party and has to be brought to a signal qudit to unlock the hidden information.

Subsequently, the initial state of qubit $A$ can be recovered from the subset of signal and noise qubits by applying a decryption unitary. The decryption operator to decrypt the $l$-th signal, where $l \in \{1, \dots, n\}$ takes the form

\begin{equation}
U^{(2,n)}_{dec}=\sum^3_{\mu=0} \alpha_{\mu} (\ket{\phi_{\mu}}{\bra{\phi_{\mu}}}_{S_lN_l})\otimes \left(\bigotimes^n_{i\neq l} (\sigma^{(N_i)}_{\mu})^{\mathsf{T}}\right)
\end{equation}
where $\ket{\phi_{\mu}}_{S_lN_l}=(\sigma^{(S_l)}_{\mu}\otimes \mathbb{I} )\ket{\Phi^+}_{S_lN_l}$, $\ket{\phi_0}=\ket{\phi^+}, \ket{\phi_1}=\ket{\psi^+}, \ket{\phi_2}=\ket{\psi^-}, \ket{\phi_3}=\ket{\phi^-}$, which are the Bell states. $\mathsf{T}$ denotes the transpose operation with respect to the computational basis. The unitarity of the decryption operator $U_{dec}^{(2,n)}$ is fundamentally guaranteed by the orthonormality of the Bell basis and the unit norm condition $|\alpha_\mu|^2 = 1$. One can view the operator $\sigma_{\mu}^{\mathsf{T}}$ as being applied conditionally, based on the Bell basis state of the  $S_l, N_l$ subsystem. Following the decryption process, the state of the system $S_l$ becomes the initial state of $A$ and all the noise qubits, are consumed. Consequently, the decryption unitary effectively allows for the complete recovery of the original state using only a single signal qubit, any one of the $n$ encrypted clones, provided it is combined with the held decryption keys. 

It is worth mentioning that, as remarked in ~\cite{Kempf1}, for $d=2$ with an even number of signal-noise pairs $n$, the postencoding register $A$ effectively serves as an encrypted clone. This is because for an even number of signal-noise pairs $n$, the antisymmetric nature of $\sigma_2$ (where $\sigma_2^{\mathsf{T}} = -\sigma_2$) is neutralised within the tensor product. Specifically, the condition $\bigotimes_{i=1}^n \sigma_{\mu}^{(N_i)\mathsf{T}} = \bigotimes_{i=1}^n \sigma_{\mu}^{(N_i)}$ holds for the entire Pauli set $\mu \in \{0, \dots, 3\}$, leaving the encryption unitary invariant and allowing register $A$ to function as an additional clone. This allows for the recovery of the original state from $A$ via a decryption unitary $U^{(2,n)\dagger}_{enc}$, providing a layer of redundancy to the protocol.

It was also remarked that even if one loses some of the noise qubits but retains access to at least one noise qubit, one can still recover the original input state of $A$ from any subsystem composed of one pair of signal and noise qubits plus at least one half of each of the remaining pairs of signal-noise $n-1$ ~\cite{Kempf1}. For example, suppose that we lose $N_2$ and we only have the clones $S_1$ and $S_2$, the surviving system is $S_1S_2N_1N_3\cdots N_n$. One can perform the replacement $\sigma^{(N_2)\mathsf{T}}_{\mu}\rightarrow \sigma^{(S_2)}_{\mu}$ such that the decryption operator becomes:

\begin{equation}
\tilde{U}^{(2,n)}_{dec}= \sum_{\mu=0}^{3} \alpha_{\mu} |\phi_{\mu}\rangle\langle\phi_{\mu}|_{S_1 N_1} \otimes \sigma_{\mu}^{(S_2)} \otimes \bigotimes_{i=3}^{n} \sigma_{\mu}^{(N_i)\top}.
\end{equation}

This works because on the Bell signal-noise pair of system $S_2N_2$: $(\sigma_{\mu}\otimes \mathbb{I})\ket{\Phi^+}_{S_2N_2}=(\mathbb{I}\otimes (\sigma_{\mu})^{\mathsf{T}})\ket{\Phi^+}_{S_2N_2}$. Therefore, acting on the surviving signal qubit $S_2$ plays the same algebraic role as the missing transposed operator on $N_2$. So the recovered state is obtained by applying $\widetilde{U}_{dec}^{(2,n)}$ to the surviving systems and then tracing out the key qubits $N_1, N_3, \dots, N_n$, the signal qubit $S_2$ and system $A$:

\begin{equation}
\operatorname{Tr}_{AS_2N_1 N_3 \dots N_n} \left[ \widetilde{U}_{dec}^{(n)} \rho_{er} \widetilde{U}_{dec}^{(n)\dagger} \right]=\ket{\psi}\bra{\psi}_{S_1},
\end{equation}
$\rho_{er} = \operatorname{Tr}_{N_2} \left[ U_{enc}^{(2,n)} \left( |\psi\rangle\langle\psi|_A \otimes \bigotimes_{i=1}^n |\Phi^+\rangle\langle\Phi^+|_{S_i N_i} \right) U_{enc}^{(2,n)\dagger} \right]$ is the post-erasure state after losing $N_2$.

Hence, the original input state is recovered on $S_1$, even though $N_2$ is missing by sacrificing $S_2$.

\subsection{Weyl-Heisenberg Operators}

Let us denote a $d$-dimensional Hilbert space $\Hd = \mathrm{span}\{\ket{0}, \ket{1}, \ldots, \ket{d-1}\}$, with arithmetic modulo $d$. In this space, the Weyl-Heisenberg operators, which are also known as the generalised Pauli operators, $\Xop$ and $\Zop$ ($\Xop$ is sometimes called the shift operator while $\Zop$ is sometimes called the clock operator) are defined by their action on the computational basis $\{ \ket{k} \}_{k=0}^{d-1}$ as $\Zop\ket{k} = \omega^k\ket{k}$ and $\Xop\ket{k} = \ket{k+1 \pmod d}$, where $\omega = e^{2\pi i/d}$ is the $d$-th root of unity. These operators satisfy the Weyl-Heisenberg commutation relation and periodicity:
\begin{align}
    \Zop^b \Xop^a &= \omega^{ab} \Xop^a \Zop^b, & \Xop^d &= \Zop^d = \mathbb{I},
\end{align}
where $\Xop =\sum_k \ket{k+1}\bra{k}$ and $\Zop =\text{diag}(1, \omega, \dots, \omega^{d-1})$. For $(a,b)\in\ZZ_d\times\ZZ_d$, we define the Weyl-Heisenberg displacement operator:
\begin{equation}
  \Wop{a}{b} \;=\; \tau^{ab}\,\Xop^a\Zop^b, \qquad
  \label{eq:Wop}
\end{equation}
where $\tau = e^{\pi i(d+1)/d}$. Compared to the usual expression, we add an extra phase $\tau^{ab}$ to our definition of $\Wop{a}{b}$. The phase $\tau^{ab}$ is chosen so that $\Wop{a}{b}^\dagger = \Wop{-a}{-b}$ = $W^{-1}(a,b)$ and $W(a,b)^{\mathsf{T}}=W(-a,b)$, without an extra phase factor. The set $\{\Wop{a}{b}\}_{(a,b)\in\ZZ_d^2}$ is an orthonormal basis for the space of all $d \times d$ complex matrices; $M_d(\CC)$ under the Hilbert-Schmidt inner product:
\begin{equation}
  \mathrm{Tr}\!\left[\Wop{a}{b}^\dagger \Wop{a'}{b'}\right] = d\,\delta_{a,a'}\,\delta_{b,b'}
  \label{eq:HS_ortho}
\end{equation}

and follows the composition law given by:
\begin{equation}
  \Wop{a}{b}\,\Wop{a'}{b'} = \tau^{a'b-ab'}\,\Wop{a+a'}{b+b'}.
  \label{eq:composition}
\end{equation}

Leveraging on the properties of the Weyl-Heisenberg operators as discussed above, we arrive at a useful lemma that will help us in our construction of the generalised decryption operator. The proof of the corresponding lemma is shown in Appendix A.

\begin{lemma}
For a maximally entangled state $\ket{\Phi_d} = \frac{1}{\sqrt{d}} \sum_{j=0}^{d-1} \ket{j,j}$, the Weyl-Heisenberg Displacement Operators $\Wop{a}{b} \;=\; \tau^{ab}\,\Xop^a\Zop^b$ satisfy:
\begin{equation}
\begin{aligned}
(\Wop{a}{b} \otimes \mathbb{I})\ket{\Phi_d} &= (\mathbb{I} \otimes \Wop{a}{b}^\top) \ket{\Phi_d} \\
&= (\mathbb{I} \otimes \Wop{-a}{b})\ket{\Phi_d}
\end{aligned}
\end{equation}
where $\omega = e^{2\pi i / d}$ and $\tau=e^{\pi i \frac{(d+1)}{d}}$. Thus, the action of a displacement operator on a single subsystem is locally equivalent to the action of its transpose on the conjugate subsystem.
\end{lemma}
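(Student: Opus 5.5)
The plan is to split the statement into two independent facts: a general ``transpose trick'' for the maximally entangled state, and an explicit computation of $\Wop{a}{b}^{\mathsf{T}}$ using the definitions of $\Xop$, $\Zop$ and the phase $\tau$.

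\textbf{Transpose trick.} First I would show that $(M\otimes\mathbb{I})\ket{\Phi_d}=(\mathbb{I}\otimes M^{\mathsf{T}})\ket{\Phi_d}$ for an arbitrary $M\in M_d(\CC)$. Write $M=\sum_{i,j}M_{ij}\ket{i}\bra{j}$. Then
\begin{equation}
(M\otimes\mathbb{I})\ket{\Phi_d}=\frac{1}{\sqrt d}\sum_{i,j}M_{ij}\ket{i}\ket{j}.
\end{equation}
On the other side, $M^{\mathsf{T}}=\sum_{i,j}M_{ij}\ket{j}\bra{i}$, so $(\mathbb{I}\otimes M^{\mathsf{T}})\ket{\Phi_d}$ gives the same double sum. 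This is purely index bookkeeping, and it yields the first equality with $M=\Wop{a}{b}$.

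\textbf{Computing the transpose.} For the second equality I would compute $\Wop{a}{b}^{\mathsf{T}}$ directly.
\begin{itemize}
\item $\Zop$ is diagonal, so $(\Zop^b)^{\mathsf{T}}=\Zop^b$.
\item $\Xop=\sum_k\ket{k+1}\bra{k}$ is a real permutation matrix, so $\Xop^{\mathsf{T}}=\Xop^\dagger=\Xop^{-1}$, and hence $(\Xop^a)^{\mathsf{T}}=\Xop^{-a}$.
\end{itemize}
Therefore
\begin{equation}
\Wop{a}{b}^{\mathsf{T}}=\tau^{ab}\Zop^b\Xop^{-a}.
\end{equation}
Using the commutation relation $\Zop^b\Xop^{a'}=\omega^{a'b}\Xop^{a'}\Zop^b$ with $a'=-a$ gives
\begin{equation}
\Wop{a}{b}^{\mathsf{T}}=\tau^{ab}\omega^{-ab}\Xop^{-a}\Zop^b.
\end{equation}
By definition, $\Wop{-a}{b}=\tau^{-ab}\Xop^{-a}\Zop^b$. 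So it remains to verify the phase identity $\tau^{ab}\omega^{-ab}=\tau^{-ab}$, that is, $\tau^{2ab}=\omega^{ab}$. This holds because
\begin{equation}
\tau^2=e^{2\pi i(d+1)/d}=e^{2\pi i}\,e^{2\pi i/d}=\omega.
\end{equation}

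\textbf{Main obstacle: the phase convention for even $d$.} The only delicate point is the phase when $d$ is even. In that case $\tau^d=(-1)^{d+1}=-1$, so $\tau$ has order $2d$. Consequently $\tau^{ab}$, and hence $\Wop{a}{b}$, depends on which integer representatives of $a,b$ are used, not only on their residues mod $d$.
\begin{itemize}
\item The computation above is valid when $-a$ is read literally as the integer negative of the representative chosen for $a$, since $\tau^{2ab}=\omega^{ab}$ holds for all integer exponents.
\item I would state this convention explicitly: exponents are integers, and $\Wop{-a}{b}$ is evaluated with exponent $-a$.
\item If instead one insists on reducing $-a$ to $\{0,\dots,d-1\}$, I would check that the factor $\tau^{\pm db}=(\pm1)^{b}$ that appears is absorbed consistently. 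Alternatively, I would note that the claimed identity then holds only up to a sign when $d$ is even and $b$ is odd.
\end{itemize}
For odd $d$, $\tau^d=1$, so no such issue arises.
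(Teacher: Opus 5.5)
Your proposal is correct and follows the paper's proof essentially step for step: the general identity $(M\otimes\mathbb{I})\ket{\Phi_d}=(\mathbb{I}\otimes M^{\mathsf{T}})\ket{\Phi_d}$, then $\Zop^{\mathsf{T}}=\Zop$, $\Xop^{\mathsf{T}}=\Xop^{-1}$, the commutation relation, and $\tau^2=\omega$. Your even-$d$ caveat is a real point that the paper leaves implicit, since it reads $-a$ as the literal integer. Of your two options, only the second holds. For even $d$ one has $\Wop{d-a}{b}=(-1)^b\,\Wop{-a}{b}$, and this sign cannot be absorbed. So with reduced representatives the identity holds only up to $(-1)^b$.
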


On the side note, for prime $d$, the $d^2-1$ non-identity Weyl-Heisenberg operators partition neatly into $d+1$ subsets of $d-1$ mutually commuting operators, yielding a complete set of Mutually Unbiased Bases. For composite $d$, the index arithmetic modulo $d$ forms a commutative ring rather than a finite field; the resulting zero divisors inherently break this perfect partitioning, although the underlying ring structure of the displacement operators is preserved ~\cite{MUB1,MUB2}.

\subsection{AME states}

Let us review the construction of an AME state in  \cite{AMEQSS}. An absolutely maximally entangled state ~$\text{AME}(n', d)$ is defined as a pure state $|\Psi\rangle \in \mathbb{C}_d^{\otimes n'}$ of $n'$ qudits with dimension $d$ such that for every bipartition of the system into disjoint sets $A'$ and $B'$, the reduced density matrix of the smaller partition is strictly maximally mixed. Formally, for any subset $B'$ with size $m' = |B'| \leq |A'| = n' - m'$, the state satisfies $\rho_{B'} = \text{Tr}_{A'} |\Psi\rangle\langle\Psi| = \frac{1}{d^{m'}} \mathbb{I}_{d^{m'}}$, where $1\leq m'\leq \frac{n'}{2}$, which implies the von Neumann entropy reaches its theoretical maximum $S(\rho_{B'}) = m' \log_2 d$. A state is AME if and only if it can be expressed as

\begin{equation}
\ket{\boldsymbol{\Psi}}=\frac{1}{\sqrt{d^{m'}}}\sum_{k\in \ZZ^{m'}_d}\ket{k_1}_{B'_1}\cdots\ket{k_{m'}}_{B'_{m'}}\ket{\Psi(k)}_{A'}
\end{equation}

with $\bra{\Psi(k)}\ket{\Psi(k')}=\delta_{k,k'}$, where equation (11) is obtained from equation (3) of \cite{AMEQSS}. To verify if $\ket{\boldsymbol{\Psi}}$ is AME, it is sufficient to check all $\binom{m'}{\lfloor m'/2\rfloor}$ marginals are maximally mixed, then all smaller marginals are then automatically maximally mixed. 

In the context of protocol of encrypted cloning of qudits, this definition remains physically consistent as the Hilbert space dimension of the partition $A'$ (comprising the register $A$ and signal qudits $S_i$) must be greater than or equal to the dimension of the partition $B'$ (comprising the noise qudits $N_i$), satisfying the dimensionality constraint $d^{|A'|} \geq d^{|B'|}$. 

For qubit systems where dimension $d=2$, AME states exist for $n'=2$ parties, which are the Einstein-Rosen-Pdolsky (EPR) states, and $n'=3$ parties, which are the Greenberger-Horne-Zeilinger (GHZ) states. AME states for $n'=5,6$ qubits were found in ~\cite{Facchi}. However, AME states do not exist for $n'=4$ qubits ~\cite{Higuchi} neither does it exist for $n'\geq 7$ qubits~\cite{AScott,FHuber}. 

\subsection{Quantum Secret Sharing}

 Cleve et al.~\cite{Cleve} and Gottesmann ~\cite{DGottesman} conceptualised the framework of quantum secret sharing (QSS), which is to be distinguished from the sharing of classical secret using quantum resources introduced by Hillery et al.~\cite{Hillery1999}. QSS involves a dealer encoding a secret state $\ket{S} = \sum a_i \ket{i}$ of dimension $d$ into a global state $\sum a_i \ket{\Phi_i}$ shared among $n$ players. The protocol is defined by its ability to restrict information: any subset of players belonging to the access structure (authorised sets) can deterministically recover $\ket{S}$ from their reduced state. Conversely, for any subset in the adversary structure (unauthorised sets), the reduced state is completely independent of the encoded secret. We refer to $\{\ket{\Phi_i}\}$ as the basis states of the QSS scheme. If the total shared state is pure, it constitutes a pure state QSS protocol.

Let us focus our attention on threshold QSS schemes, where the access structure consists of all subsets containing at least $k$ parties while any subset with fewer than $k$ parties belongs to the adversary structure and can obtain no information about the secret. Such a protocol is denoted as a $((k, N))$ threshold scheme, where $N/2 <k\leq N$ such that no-cloning theorem is satisfied. For pure state threshold QSS, the number of players $N$ and the threshold $k$ are strictly related by $N = 2k - 1$. In such a scheme, an authorised set is any subset of participants with size $p$, where $k \leq p \leq N$ and an unauthorised set is any subset of participants with size $q$, where $0 \leq q \leq k-1$. This specific structure allows for a formal connection to absolutely maximally entangled (AME) states:

\begin{theorem}[\cite{AMEQSS}]
There exists a one to one correspondence between an $\text{AME}(2m', d)$ state and a pure state $((m', 2m'-1))$ threshold QSS scheme with secret and share dimensions $d$, characterised by AME basis states.
\end{theorem}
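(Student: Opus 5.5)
The plan is to prove both directions of the correspondence by treating one party of the $\text{AME}(2m',d)$ state as a reference system that purifies the secret. Write the AME state on parties $R, P_1,\dots,P_{2m'-1}$. Since a single party is a set of size $1\le m'$, equation (11) lets us write
\[
\ket{\boldsymbol{\Psi}}=\frac{1}{\sqrt{d}}\sum_{i=0}^{d-1}\ket{i}_R\ket{\Phi_i}_{P_1\cdots P_{2m'-1}},
\]
with $\bra{\Phi_i}\ket{\Phi_j}=\delta_{ij}$. The encoding $\ket{S}=\sum_i a_i\ket{i}\mapsto\sum_i a_i\ket{\Phi_i}$ is therefore an isometry, and $\{\ket{\Phi_i}\}$ are the AME basis states. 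Applying the encoding to half of a maximally entangled pair $\ket{\Phi_d}$ reproduces $\ket{\boldsymbol{\Psi}}$, up to the transpose convention of Lemma 1, which can be absorbed into a relabelling of the basis.

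For AME $\Rightarrow$ QSS, first take an unauthorised set $T$ with $|T|\le m'-1$. Then $T\cup R$ has size at most $m'$, so by the AME property $\rho_{RT}=\mathbb{I}/d^{|T|+1}$ is a product $\rho_R\otimes\rho_T$. This means the reference is uncorrelated with $T$. Linearity in the secret amplitudes then shows that $\rho_T$ is independent of $\ket{S}$ for every input, since the cross terms $\mathrm{Tr}_{\bar T}\ket{\Phi_i}\bra{\Phi_j}$ are proportional to $\delta_{ij}$ and the diagonal terms are all equal.

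Next take an authorised set $A$ with $|A|\ge m'$. Its complement $\bar A$ among the players has size at most $m'-1$, so $\bar A$ is uncorrelated with $R$ by the same argument. By the standard purification (Schumacher–Westmoreland / Knill–Laflamme erasure-correction) criterion, $I(R:\bar A)=0$ for a pure state on $R A\bar A$ implies that there is a decoding unitary on $A$ mapping the secret into one share, with $R$ maximally entangled to it. Equivalently, $S(RA)=S(\bar A)$ and $S(A)=S(\bar A)+\log d$ give $I(R:A)=2\log d$, which is the condition for perfect recovery. Since $N=2m'-1$, every subset of players is either authorised or unauthorised, so this is exactly the $((m',2m'-1))$ threshold structure.

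For QSS $\Rightarrow$ AME, start from a pure threshold scheme with basis $\{\ket{\Phi_i}\}$ and form $\ket{\boldsymbol{\Psi}}=d^{-1/2}\sum_i\ket{i}_R\ket{\Phi_i}$. The task is to show that every set of $m'$ parties out of $2m'$ is maximally mixed; by purity its complement of size $m'$ is then maximally mixed too.

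Consider first a set of $m'$ parties that contains $R$, so it is $R\cup T$ with $|T|=m'-1$, an unauthorised set. Secrecy of $T$ gives $\rho_{RT}=\rho_R\otimes\rho_T$ with $\rho_R=\mathbb{I}/d$. Otherwise the set consists of $m'$ players, and its complement is $R\cup T'$ with $|T'|=m'-1$, so the same product structure appears on the complement.

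It remains to show that $\rho_T$ itself is maximally mixed for every $(m'-1)$-subset $T$, using share dimension $d$. This is the step I expect to be the main obstacle. I would attack it with entropy counting. The complement of $R\cup T$ is an authorised $m'$-set $A$, and recoverability means $R$ purifies into $A$, which gives $S(A)=S(T)+\log d$. Secrecy of $T$ gives $S(RT)=S(T)+\log d$, and purity gives $S(RT)=S(A)$.

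Now apply subadditivity to all $m'$-subsets together with the dimension bound $S(X)\le|X|\log d$, and symmetrise over subsets. The aim is to force $S(T)=(m'-1)\log d$ for all $(m'-1)$-sets, possibly by induction on subset size. If that works, every $m'$-marginal reaches its maximal entropy and is therefore maximally mixed, and the one-to-one correspondence follows because the two constructions are mutually inverse.
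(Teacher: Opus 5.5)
There is a genuine gap, in the QSS $\Rightarrow$ AME direction. Your AME $\Rightarrow$ QSS direction is fine, and it treats all sets uniformly, whereas the paper works with a single $m'|m'$ cut. In the converse you correctly spot what the paper's Appendix B only asserts: secrecy of an $(m'-1)$-set $T$ gives $\rho_{RT}=\rho_R\otimes\rho_T$, not $\rho_T=\mathbb{I}/d^{m'-1}$. But that is exactly the step you leave open, and the tools you name cannot close it.

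Subadditivity, purity, secrecy and $S(X)\le|X|\log d$ do not determine $S(T)$. Take $m'=3$ and assign every share entropy $\log d$, every pair $s$, every triple $s+\log d$, every four-set $2\log d$, and all five players $\log d$. Fix the $R$-containing sets by purity. Every one of those constraints then holds for any $s\in[\log d,2\log d]$. Symmetrising does not help, since this assignment is already symmetric. The missing ingredient is strong subadditivity: for instance, $S(P_1P_2P_3)+S(P_2)\le S(P_1P_2)+S(P_2P_3)$ forces $s\ge 2\log d$.

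Here is one way to finish. Take any $(m'-1)$-set $B$ and a player $C\notin B$, and let $A$ be the other $m'-1$ players. Secrecy of $A$ and of $B$, together with purity, gives
\[
\log d+S(A)=S(RA)=S(BC)\le S(B)+S(C),\qquad \log d+S(B)=S(RB)=S(AC)\le S(A)+S(C).
\]
Adding the two yields $\log d\le S(C)\le\log d$, so both inequalities are tight: $S(C)=\log d$ and $I(B{:}C)=0$. This is the $[[2m'-1,1,m']]_d$ code saturating the quantum Singleton bound.

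By strong subadditivity, $I(X{:}C)\le I(B{:}C)=0$ for every $X\subseteq B$. Any $X$ with $|X|\le m'-1$ and $C\notin X$ extends to such a $B$, so $S(X\cup\{C\})=S(X)+\log d$. Induction from $S(\emptyset)=0$ then gives $S(X)=|X|\log d$, i.e.\ $\rho_X=\mathbb{I}/d^{|X|}$, for all $|X|\le m'-1$.

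Combined with $\rho_{RT}=\rho_R\otimes\rho_T$, every $m'$-set containing $R$ is maximally mixed, and purity covers the complementary $m'$-sets. This is the entropic form of the fact that quantum MDS codes are pure. It is also exactly what the paper's proof tacitly assumes when it says $\rho_{B'}$ ``must be maximally mixed''.
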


This allows us to formally relate encrypted cloning to QSS via a two-fold approach: first, by demonstrating that the encrypted state is inherently an $\text{AME}(5,d)$ state, as will be shown in section IV, and second, by demonstrating that the partially encrypted Bell state, wherein encryption is applied to only one of the two registers, explicitly realizes an $\text{AME}(6,d)$ state, as detailed in Section V. For completeness, the proof of Theorem 1 is provided in Appendix B.

\section{Operators for Generalised Encrypted Cloning}

To mimic the qubit protocol, the encrypted cloning unitary must imprint the original qudit state across multiple signal qudits, ensure that decryption uses the noise qudits as a key and prevent any single encrypted copy from revealing the data without the key \cite{Kempf1}. This requires extending the unitary construction from combinations of Pauli matrices to combinations of the generalised Weyl-Heisenberg operators $\{\Xop,\Zop\}$ and their tensor products. Meanwhile, the decryption unitary for qudits must act jointly on one signal qudit (the selected copy) and all noise qudits (the key). The structure must ensure that only one decrypted copy can be extracted and the key system is irreversibly consumed during decryption, so that the no-cloning principle is respected. In this section, we attempt to do just that by generalising the encryption and decryption unitaries for the qubit case using the Weyl-Heisenberg displacement operators such that when we set $d=2$ we recover both equation (2) and (3).

We start with a target qudit $A$ in an unknown state $\ket{\psi}_A$ and $n$ pairs of maximally entangled qudits $(S_i, N_i)$, which serve as the signal and noise respectively. Here, the initial state is given by,

\begin{equation}
\ket{\Psi_0}=\ket{\psi}_A\otimes \bigotimes^n_{i=1} \ket{\Phi}_{S_iN_i}
\end{equation}
where $\ket{\Phi}_{S_iN_i}=\frac{1}{\sqrt{d}}\sum^{d-1}_{j_i=0}\ket{j_i,j_i}_{S_iN_i}$ is the $d$ dimensional Bell state, written in the computational basis. Using the Weyl-Heisenberg displacement operators, we can then construct the following encryption unitary $U^{(d,n)}_{enc}$ and decryption unitary $U^{(d,n)}_{dec}$, whose explicit verification of unitarity is provided in Appendix C. They are given by:

\begin{equation}
\begin{split}
U^{(d,n)}_{enc} = \frac{1}{d} \sum_{a,b \in \mathbb{Z}_d} & \varphi^{-1}(a,b) \, W^{(A)}(a,b) \\
& \otimes \bigotimes_{i=1}^{n} \left(W^{(S_i)}(a,b)\right)^{\dagger}, 
\end{split}
\end{equation}
\begin{equation}
\begin{split}
U^{(d,n)}_{dec} = \sum_{a,b \in \mathbb{Z}_d} & \varphi(a,b) \ket{\Psi_{-a,-b}}\bra{\Psi_{-a,-b}}_{S_l N_l} \\
& \otimes \bigotimes_{i \neq l}^{n} \left( W^{(N_i)}(a,b) \right)^{\mathsf{T}}
\end{split}
\end{equation}
where $\ket{\Psi_{-a,-b}}_{S_lN_l}=\tau^{-ab}\bigl(W^{\dag}(a,b)\otimes\mathbb{I})\ket{\Phi}_{S_lN_l}=\tau^{-ab}\bigl(W(-a,-b)\otimes\mathbb{I})\ket{\Phi}_{S_lN_l}$. One can easily see that equation (13) corresponds to the generalisation of (2) and inherently contains the SWAP operator (see Appendix F), while equation (14) is the generalisation of (3), as the Weyl-Heisenberg displacement operator reduces to hermitian Pauli operators when $d=2$. The encryption operator is written such that the operator acts trivially on the noise qudits, while applying identical transformations to both register $A$ and the set of signal qudits. In contrast, the decryption operator is desgined such that it must involve the signal qudit $S_l$ which you are trying to decrypt and the entire collection of the noise qudits $\{N_1,\cdots, N_n\}$.  Here, $\varphi(a,b)\in\CC$ are coefficients to be determined by the unitarity condition, which satisfies $|\varphi(a,b)|^2=1$; a generalisation of $\alpha_{\mu}$. A valid choice for the phase factor is $\varphi(a,b)=\tau^{-(a^2+b^2-(n+1)ab)}$, where $\tau=e^{i\pi (d+1)/d}$ satisfies $\tau^2=\omega$. This specific form is chosen to ensure algebraic consistency with the symmetric Weyl-Heisenberg operators $W(a,b)$. Specifically, the quadratic structure in $a$ and $b$ ensures that the phase contributions from composing displacement operators across all $n+1$ subsystems (1 from register $A$ and $n$ from register $\{S_i\}$) cancel correctly, thereby preserving the unitarity of the global encryption transformation.

To demonstrate that our proposed unitary operators, equations (13) and (14), are consistent with the established qubit results in equations (2) and (3), we evaluate the displacement operators for the $d=2$ case, which yields:
\begin{equation}
\begin{split}
  (\Wop{0}{0})^{\dag}=\mathbb{I}&,\quad
  (\Wop{1}{0})^{\dag}=\Xop^{\dag}=\sigma_x,\quad\\
  (\Wop{0}{1})^{\dag}=\Zop^{\dag}&=\sigma_z,\quad
  (\Wop{1}{1})^{\dag}=\tau^{-1}(\sigma_x\sigma_z)^{\dag} = -\sigma_y.
\end{split}  
\end{equation}
With $\frac{1}{d} = \frac{1}{2}$ and $\varphi(a,b)$ giving $\varphi(0,0)=1$, $\varphi(1,0)=\varphi(0,1)=i$, $\varphi(1,1)=(-i)^{n-1}$, equation (13) gives:

\begin{align}
    U_{enc}^{(2,n)} = \frac{1}{2} \bigg[ &\mathbb{I} \otimes \mathbb{I}^{\otimes n} 
    - i\sigma_x^{(A)} \otimes \left(\sigma_x^{\otimes n}\right)
    - i\sigma_z^{(A)} \otimes \left(\sigma_z^{\otimes n}\right) \nonumber \\
    &+ (i)^{n-1} (-\sigma_y)^{(A)} \otimes \left((-\sigma_y)^{\otimes n}\right) \bigg]
\end{align}

The last term: $(i)^{n-1}(-\sigma_y)^{(A)}\otimes(-\sigma_y)^{\otimes n} = (-1)(-i)^{n+1}\sigma_y^{(A)}\otimes\sigma_y^{\otimes n}$. This reduces to
\begin{equation}
  U_{enc}^{(2,n)} = \frac{1}{2}\sum_{\mu=0}^{3}\alpha_\mu^{-1}\,\sigma_\mu^{(A)}\otimes\tensn{\sigma_\mu},
  \label{eq:d2_result}
\end{equation}
with $\alpha^{-1}_0=1, \alpha^{-1}_1=\alpha^{-1}_3=-i$ and $\alpha^{-1}_2=-(-i)^{n+1}$, matching equation (2). To obtain the expression of equation (14) when $d=2$, we use $\Wop{a}{b}^{\mathsf{T}}=\Wop{-a}{b}$ and the $d=2$ values of the Bell
states $\ket{\Psi_{-a,-b}}=\tau^{-ab}(W(-a,-b)\otimes\mathbb{I})\ket{\Phi}$:
\begin{alignat}{2}
  \Wop{0}{0}^{\mathsf{T}} &= \mathbb{I},&\quad
    \ket{\Psi_{0,0}} &= \ket{\phi^+},\notag\\
  \Wop{1}{0}^{\mathsf{T}} &= \sx,&\quad
    \ket{\Psi_{1,0}} &= (\sx\otimes\mathbb{I})\ket{\Phi}
      = \ket{\psi^+},\notag\\
  \Wop{0}{1}^{\mathsf{T}} &= \sz,&\quad
    \ket{\Psi_{0,1}} &= (\sz\otimes\mathbb{I})\ket{\Phi}
      = \ket{\phi^-},\notag\\
  \Wop{1}{1}^{\mathsf{T}} &= -\sy^{\mathsf{T}} = \sy,&\,\
    \ket{\Psi_{1,1}} &= (i)(-\sy\otimes\mathbb{I})\ket{\Phi}\notag\\
      & & &= \ket{\psi^-},
\end{alignat}
where $\ket{\phi^+} = \ket{\phi_0}, \ket{\psi^+} = \ket{\phi_1}, \ket{\phi^-} = \ket{\phi_3}$ and $\ket{\psi^-} = \ket{\phi_2}$. For the tensor factors, noting $\Wop{1}{1}^{\mathsf{T}}=-\sigma_2^{\mathsf{T}}$
so each of the $(n-1)$ copies contributes $(-1)$:
\begin{equation}
  \bigotimes_{i\neq l}^{n}\bigl(\Wop{a}{b}^{(N_i)}\bigr)^{\!\mathsf{T}}
  =
  \begin{cases}
    \bigotimes_{i=2}^{n}\sigma_\mu^{(N_i)\mathsf{T}},\; {\scriptstyle(a,b)\neq(1,1)},\\[4pt]
    (-1)^{n-1}\bigotimes_{i\neq l}^{n}\sigma_2^{(N_i)\mathsf{T}}, \,{\scriptstyle (a,b)=(1,1)}.
  \end{cases}
\end{equation}

Substituting the values of $\varphi(a,b)$ as stated above and noting that $\varphi(1,1)(-1)^{n-1}= (-i)^{(n-1)}(-1)^{(n-1)} = (i)^{n-1}=-(i)^{n+1}$,
\begin{equation}
\begin{split}
  U_{dec}^{(2,n)}
    = \sum_{\mu=0}^{3}\alpha_\mu\,
      \ket{\phi_\mu}\!\bra{\phi_\mu}_{S_1N_1}
      \otimes
      \bigotimes_{i\neq l}^{n}\sigma_\mu^{(N_i)\mathsf{T}}
\end{split}     
\end{equation}
with $\alpha_0=1$, $\alpha_1=\alpha_3=i$, $\alpha_2=-(i)^{n+1}$,
matching equation (3). 

In section IIA, we established that the encrypted state obtained after applying the encryption operator must satisfy two conditions. First, for any single-register subset $R \in {A,S_1,\dots,S_n}$, the reduced density operator $\rho_R$ must be maximally mixed in order to conceal the encoded information from an external observer i.e. $\rho_A=\Tr_{S_1,S_2,\cdots,S_n,N_1,\cdots N_n}(U^{(d,n)}_{enc}\ket{\Psi_0}\bra{\Psi_0}U^{(d,n)\dag}_{enc})=\mathbb{I}_d/d$ and $\rho_{S_i}=\Tr_{A,S_1,\cdots,S_{i-1},S_{i+1},\cdots,S_n,N_1,\cdots N_n}$  $   (U^{(d,n)}_{enc}\ket{\Psi_0}\bra{\Psi_0}U^{(d,n)\dag}_{enc})=\mathbb{I}_d/d$ for $i\in\{1,2,\cdots,n\}$. Second, the action of the decryption operator on the encrypted state must recover the original state in register $A$ from the subsystem $S_lN_1\cdots N_n$ for any $l\in{1,\dots,n}$. In this picture, the noise qudits ${N_i}$ act as a decryption key: together they decouple a chosen signal qudit $S_l$, which then carries the original state $\ket{\psi}$ from register $A$ i.e. $\rho_{S_l}=\Tr_{A,S_1,\cdots,S_{l-1},S_{l+1},\cdots,S_n,N_1,\cdots,N_n}(U^{(d,n)}_{dec}U^{(d,n)}_{enc}\ket{\Psi_0}$ $\bra{\Psi_0}U^{(d,n)\dag}_{enc}U^{(d,n)\dag}_{dec})=\ket{\psi}\bra{\psi}_{S_l}$. The remaining signal qudits remain entangled with their corresponding noise partners and therefore cannot be accessed as independent pure states. Consequently, only one signal qudit can be ``unlocked'' at a time, ensuring that only a single pure copy of $\ket{\psi}$ is recoverable and thus preserving consistency with the no-cloning theorem. These properties certify the validity of the proposed encryption and decryption operators for the protocol. For completeness, detailed proofs are deferred to Appendix D.

Also, note that unlike $d=2$, the postencoding qudit $A$ cannot serve as an encrypted clone regardless of whether $n$ is even or odd for $d>2$ as the Weyl-Heisenberg operator $W(a,b)$ is symmetric i.e. $W(a,b)^{\mathsf{T}}=W(a,b)$ if $a=0$ or ($d$ is even, $a=d/2$, $b$ even), while it is anti-symmetric i.e. $W(a,b)^{\mathsf{T}}=-W(a,b)$  if and only if $d$ is even, $a=d/2$, and $b$ is odd. For the property $W(a,b)^{\mathsf{T}} = \pm W(a,b)$ to hold for every element in the basis, which is the requirement for qubit $A$ to serve as an encrypted clone, the condition $2a \equiv 0 \pmod d$ must be satisfied for all $a \in \{0, 1, \dots, d-1\}$. This can only happen when $d=2$.

However, like $d=2$, even if we lose some of the  $n$ noise qudits, one can still recover the original state $A$ from any subsystem composed of one pair of signal and noise qubits plus at least one half of each of the remaining $n-1$ pairs of signal-noise qudit pairs. An example where $N_2$ is lost but the original state can still be recovered from $S_1$, by replacing it with $S_2$, is shown in Appendix E below, where the derivation is slightly lengthier than that for $d=2$.

\section{Relationship between encrypted states and AME states}

In section III, we have focused on encrypted cloning as a dynamic process to distribute quantum information of the state of register $A$. It is worthwhile to note that it also provides a recipe for generating highly entangled states. By initialising the input register in a uniform superposition, the protocol produces a global encrypted state $|\Psi_{enc}\rangle$ that acts as a multipartite entanglement resource. In fact, for arbitrary dimensions, this construction yields an Absolutely Maximally 
Entangled (AME) state. 

\begin{definition}
Let $A$ be a $d$-dimensional quantum register. The initial state of the register, denoted by $|\psi\rangle_A$, is defined as the uniform superposition state$$|\psi\rangle_A = \frac{1}{\sqrt{d}} \sum_{j=0}^{d-1} |j\rangle,$$where $\{|j\rangle\}_{j=0}^{d-1}$ denotes the standard computational basis of the Hilbert space $\mathbb{C}^d$.
\end{definition}

With that, we formalise the relationship in the following theorem:

\begin{theorem}
Let $d \geq 2$ and $n \geq 2$. Consider a system of $n$ encrypted signal-noise pair qudits together with an input register $A$ initialized in the uniform superposition state $\ket{\psi}_A$ defined in definition 1. The resulting global encrypted state $\ket{\Psi_{enc}} \in (\mathbb{C}^d)^{\otimes (2n+1)}$ is an absolutely maximally entangled state $\mathrm{AME}(2n+1, d)$ if and only if $n = 2$. For $n > 2$, $\ket{\Psi_{enc}}$ fails to satisfy the property that all reduced density matrices $\rho_{E}$ are maximally mixed for subsets $|E| = n$.
\end{theorem}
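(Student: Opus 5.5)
The plan is to first put $\ket{\Psi_{enc}}$ into an explicit normal form, then verify the marginals for $n=2$, and use a rank argument for $n>2$.

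\textbf{Step 1: normal form.} Apply $U^{(d,n)}_{enc}$ to $\ket{\psi}_A\otimes\bigotimes_i\ket{\Phi}_{S_iN_i}$. Each factor $(W^{(S_i)}(a,b))^\dagger$ acting on $\ket{\Phi}_{S_iN_i}$ gives, up to the phase $\tau^{ab}$, the generalised Bell state $\ket{\Psi_{-a,-b}}_{S_iN_i}$ defined after equation (14). By Hilbert--Schmidt orthogonality, equation (\ref{eq:HS_ortho}), these $d^2$ Bell states are orthonormal. Hence
\begin{equation*}
\ket{\Psi_{enc}}=\frac{1}{d}\sum_{a,b\in\ZZ_d}c(a,b)\,W(a,b)\ket{\psi}_A\otimes\bigotimes_{i=1}^n\ket{\Psi_{-a,-b}}_{S_iN_i},
\end{equation*}
where $c(a,b)$ is a unimodular phase built from $\varphi^{-1}(a,b)$ and powers of $\tau$. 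For the uniform input of Definition 1, $\Zop^b\ket{\psi}$ is the Fourier basis vector $\ket{\hat b}$, and $\Xop^a\ket{\hat b}=\omega^{-ab}\ket{\hat b}$. So $W(a,b)\ket{\psi}_A$ equals $\ket{\hat b}_A$ up to a phase. Register $A$ therefore records only the label $b$, and all $d^2$ labels $(a,b)$ are carried by the pairs. This normal form drives both directions of the theorem.

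\textbf{Step 2: the case $n=2$ (sufficiency).} We must show that every two-qudit marginal of the five-party state is $\mathbb{I}/d^2$. By the remark after equation (11), smaller marginals then follow automatically. Using the $S_1N_1\leftrightarrow S_2N_2$ symmetry, the cases are:
\begin{itemize}
\item[(i)] $\{S_i,N_i\}$: the complement $A\,S_jN_j$ carries the orthonormal family $\ket{\hat b}\ket{\Psi_{-a,-b}}$, so the marginal is $\frac{1}{d^2}\sum_{a,b}\ket{\Psi_{-a,-b}}\bra{\Psi_{-a,-b}}=\mathbb{I}/d^2$.
\item[(ii)] $\{A,S_i\}$ and $\{A,N_i\}$.
\item[(iii)] cross pairs $\{S_1,S_2\}$, $\{S_1,N_2\}$, $\{N_1,N_2\}$.
\end{itemize}
For (ii) and (iii), expand $\ket{\Psi_{-a,-b}}\propto\sum_j\omega^{-bj}\ket{j-a}\ket{j}$ in the computational basis and write the state in the form of equation (11), with the chosen two parties as $B'$. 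Then check that the complementary three-party vectors $\ket{\Psi(k)}$ are orthonormal. In each case this reduces to a character sum $\frac1d\sum_x\omega^{(\cdot)x}=\delta$ whose argument is a unimodular linear combination of the free indices. Equivalently, one can note that $\ket{\Psi_{enc}}$ is a stabilizer state and show that no nontrivial stabilizer element has support of size at most $2$.

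\textbf{Step 3: the case $n>2$ (failure of AME).} We exhibit a subset $E$ with $|E|=n$ whose marginal has rank at most $d^2<d^n$.
\begin{itemize}
\item For even $n$, take $E$ to be $n/2$ complete signal--noise pairs. By the normal form, the support of $\rho_E$ lies in $\mathrm{span}\{\bigotimes_{i\le n/2}\ket{\Psi_{-a,-b}}\}$, which has dimension $d^2$.
\item For odd $n\ge3$, take $E=\{A\}$ together with $(n-1)/2$ complete pairs. The support then lies in $\mathrm{span}\{\ket{\hat b}_A\otimes\bigotimes\ket{\Psi_{-a,-b}}\}$, again of dimension $d^2$, since $A$ contributes no new label.
\end{itemize}
In both cases $\rho_E\neq\mathbb{I}/d^n$, so $\ket{\Psi_{enc}}$ is not $\mathrm{AME}(2n+1,d)$ for $n>2$. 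This gives the stated failure for subsets of size $n$, and with Step 2 it gives the ``if and only if''.

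\textbf{Main obstacle.} I expect the hardest part to be the cross marginals in Step 2, namely $\{S_1,S_2\}$, $\{S_1,N_2\}$, $\{N_1,N_2\}$ and $\{A,N_i\}$. There the phases $\varphi^{-1}(a,b)$ and $\tau^{ab}$ do not cancel trivially. One must confirm that the resulting character sums vanish for every $d$, including even and composite $d$, where zero divisors of $\ZZ_d$ could in principle spoil orthogonality. I would first try the stabilizer route, since it turns the question into checking that a small linear system over $\ZZ_d$ has only the trivial solution. I would then verify that the coefficient determinants are units in $\ZZ_d$; these should be $\pm1$.
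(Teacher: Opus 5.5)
\textbf{Overall.} Your outline is correct and follows the same architecture as the paper's proof. For $n>2$, you choose an $n$-qudit set $E$ that cuts no Bell pair and bound the Schmidt rank by $d^2<d^n$; this is exactly the paper's argument. For $n=2$, you check the ten two-party marginals, as Appendix G does. Your observation that $W(a,b)\ket{\psi}_A\propto\ket{\hat b}$ for uniform input is better bookkeeping than the paper's index-level computation. Whenever the complement of $E$ contains a full pair $S_jN_j$, tracing it forces $(a,b)=(a',b')$, and the marginal follows immediately. This covers Groups 1 and 2. That includes $\{A,N_i\}$, which you listed among the hard cases but which is no harder than $\{A,S_i\}$.

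\textbf{The missing step.} The proposal does not yet verify the part you flag as the main obstacle: the cross-pair marginals $\{S_1,S_2\}$, $\{S_1,N_2\}$, $\{N_1,S_2\}$, $\{N_1,N_2\}$. You leave these as an expectation that some coefficient is a unit, and that holds only for one order of summation. In your normal form,
\[
\ket{\Psi_{enc}}=\frac{1}{d}\sum_{a,b}\tau^{a^2+b^2}\omega^{-ab}\ket{\hat b}_A\ket{\Psi_{-a,-b}}_{S_1N_1}\ket{\Psi_{-a,-b}}_{S_2N_2}.
\]
For these $E$, register $A$ is traced out, which gives $\delta_{b,b'}$; this is where uniformity enters. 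Set $a'=a+c$. The single-qudit traces are $\operatorname{Tr}_N\ket{\Psi_{-a,-b}}\bra{\Psi_{-a',-b}}=\frac{1}{d}\sum_m\ket{m}\bra{m-c}$ and $\operatorname{Tr}_S\ket{\Psi_{-a,-b}}\bra{\Psi_{-a',-b}}=\frac{\omega^{bc}}{d}\sum_m\ket{m}\bra{m+c}$. Both depend on $a$ only through $c$, while the prefactor is $\tau^{a^2-a'^2}\omega^{bc}=\tau^{-c^2}\omega^{-ac}\omega^{bc}$. Summing over $a$ first therefore gives $d\,\delta_{c,0}$ with coefficient $1$ for every $d$, and $\rho_E=\mathbb{I}/d^2$ follows at once.

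If you sum over $b$ first, you meet $\omega^{2bc}$ for $\{S_1,N_2\}$ and $\omega^{3bc}$ for $\{N_1,N_2\}$. These fail to force $c=0$ when $2\mid d$ and $3\mid d$ respectively. That is exactly the zero-divisor issue you anticipated, and it is why the paper's Group 3c needs a separate even-$d$ argument. The quadratic chirp $\tau^{a^2}$ in $\varphi^{-1}$ is the ingredient that closes the proof for all $d$. The stabilizer alternative is unnecessary, and it would also require showing that $U^{(d,2)}_{enc}$ is Clifford, which you assert without proof.
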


This means that for all subsets of qudits $E \subset \{A,S_1,N_1,S_2,N_2\}$ with $|E|\leq 2$, the marginal density matrix is maximally mixed: $\rho_{E}=\frac{1}{d^{|E|}}\mathbb{I}_{d^{|E|}}$. To verify that a state $|\Psi\rangle$ of $m'$ qudits is Absolutely Maximally Entangled, it is sufficient to show that every marginal of size $\lfloor m'/2 \rfloor$ is maximally mixed. Specifically, one checks all $\binom{m'}{\lfloor m'/2 \rfloor}$ such marginals; if these are maximally mixed, all smaller marginals are guaranteed to be so. For the $m'=5$ case considered here, this requires demonstrating that all $\binom{5}{2}=10$ bipartite marginals satisfy $\rho_E = \frac{1}{d^2} \mathbb{I}_{d^2}$. Direct evaluation of these 10 marginals is straightforward but notationally tedious. By writing the $n=2$ state explicitly as

\begin{equation}
\begin{split}
    |\Psi_{enc}\rangle&=U^{(d,2)}_{enc}\ket{\Psi_0}\\
    &= \frac{1}{d^{2}}\sum_{a,b}\tau^{a^2+b^2}
      \sum_{k,j_1,j_2}\psi_k \omega^{b(k-j_1-j_2)}\\
      &|k+a,j_1-a,j_2-a\rangle_{AS_1S_2}
      \otimes|j_1,j_2\rangle_{N_1N_2},
\end{split}      
\end{equation}
one can exploit the permutation symmetries of the system to partition the verification into three distinct groups:

\begin{table}[h]
\centering
\begin{tabular}{@{}lll@{}}
\toprule
\textbf{Group} & \textbf{Marginals} & \textbf{Type} \\ \midrule
Group 1 & $\rho_{S_1N_1}, \rho_{S_2N_2}$ & Within-pair \\
Group 2 & $\rho_{AS_1}, \rho_{AN_1}, \rho_{AS_2}, \rho_{AN_2}$ & $A$ + 1 noise or signal \\
Group 3 & $\rho_{S_1S_2}, \rho_{S_1N_2}, \rho_{N_1S_2}, \rho_{N_1N_2}$ & Between pairs \\ \bottomrule
\end{tabular}
\end{table}

While this group by group partial trace calculation rigorously confirms the AME property, it is far too exhaustive. However, for completeness, we relegate this lengthy, explicit computation to Appendix G. While in Appendix H, we show that at least one of the three-register marginal density operators is not maximally mixed.
Here, we provide a sketch of the proof of Theorem 2. 

\begin{proof}[Sketch of proof]
We treat the cases $n=2$ and $n>2$ separately.

For $n=2$, Appendix G evaluates all $10$ two-qudit marginals of
$\ket{\Psi_{enc}}$ and shows that
\begin{equation}
\rho_E=\frac{\mathbb{I}_{d^2}}{d^2}
\qquad
\text{for every }
E\subseteq\{A,S_1,N_1,S_2,N_2\},
\quad |E|=2.
\end{equation}
Hence $\ket{\Psi_{enc}}$ is an $\mathrm{AME}(5,d)$ state.

Suppose now that $n>2$. To disprove the AME property, it is enough
to find one $n$-qudit marginal that is not maximally mixed. We
therefore choose an $n$-qudit subsystem $E$ for which no initial
signal-noise Bell pair is split by the bipartition $E|E^c$, where $E^c$ is the complement of $E$. When
$n$ is even, let $E$ contain the first $n/2$ complete pairs
$S_iN_i$. When $n$ is odd, let $E$ contain the input qudit $A$
together with the first $(n-1)/2$ complete pairs. In either case,
$E$ contains exactly $n$ qudits, and every initial Bell pair lies
entirely on one side of the cut.

The encryption unitary is a linear combination of $d^2$
tensor-product Weyl operators. Since the initial state is a product
state across the particular cut chosen above, each Weyl summand
produces a product vector across that cut. Consequently,
$\ket{\Psi_{enc}}$ is a sum of at most $d^2$ product
vectors across $E|E^c$, and hence
\begin{equation}
\operatorname{SR}_{E\mid E^c}
\bigl(\ket{\Psi_{enc}}\bigr)\leq d^2.
\end{equation}

Because the global encrypted state is pure, the rank of its reduced
state on $E$ equals the Schmidt rank across the same cut. Therefore,
\begin{equation}
\operatorname{rank}(\rho_E)
=
\operatorname{SR}_{E\mid E^c}
\bigl(\ket{\Psi_{enc}}\bigr)
\leq d^2
<
d^n,
\end{equation}
where the strict inequality holds for $d\geq2$ and $n>2$. Thus
$\rho_E$ is not maximally mixed. Hence
$\ket{\Psi_{enc}}$ is not an
$\mathrm{AME}(2n+1,d)$ state for $n>2$.
\end{proof}

In this vein, Theorem 2 implies that the encrypted states of $n=2, d=2$ correspond to AME$(5,2)$ state, while the encrypted states of $n=2, d=3$ correspond to AME$(5,3)$ state, so on and so forth. This is of interest, as encrypted cloning has a larger implication than it seems. In ~\cite{AJScott, SBall}, it was shown that $|\Psi\rangle$ is an absolutely maximally entangled state if and only if the subspace spanned by $|\Psi\rangle$ forms a pure $\{[(d_1, \ldots, d_{n'}), 1, \lfloor m'\rfloor + 1]\}$ quantum error-correcting code, where $d_i$ are the local dimensions of each subsystem $i$ \footnote{It was shown in ~\cite{AMEQSS} that a 5-qubit code can be used to construct an $\text{AME}(5,2)$ state, though the authors did not establish the converse.}. In other words, a state is AME if and only if it can be viewed as a single codeword of a quantum error-correcting code with maximum possible distance $\lfloor m'\rfloor + 1$, meaning it can correct errors on any subset of up to $\lfloor m'/2\rfloor$ subsystems. This means that an AME$(5,2)$ state exists if and only if the
[[5,1,3]] perfect 5-qubit code exists, where $d_1=d_2=\cdots=d_{n'}=2$. The two are thus equivalent objects. This means that the encrypted states can be used as an error correction code.

\section{Mapping Encrypted Cloning to Quantum Secret Sharing}

In the previous section, we have shown that encrypted state for $n=2$ signal-noise pairs corresponds to an AME$(5,d)$ for arbitrary $d$. In this section, we shall illustrate, with two examples, how to construct an AME$(6,d)$ state. The purpose of this section is to formalise the relationship between encrypted cloning with quantum secret sharing.

\begin{theorem}
For a $d$-dimensional Bell state under partial encryption, where only a single register is encrypted, there exist an encrypted cloning protocol with $n=2$ signal-noise qudit pairs (up to a local unitary) that exhibits a one to one correspondence to a pure state $((3,5))$ threshold QSS scheme with secret and share dimensions $d$, where $d\geq 2$.
\end{theorem}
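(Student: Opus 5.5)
The plan is to reduce Theorem 3 to Theorem 1 with $m'=3$. I would construct an $\mathrm{AME}(6,d)$ state out of the $n=2$ encrypted cloning protocol, and then invoke the one-to-one correspondence between $\mathrm{AME}(6,d)$ and pure $((3,5))$ threshold QSS schemes. The natural construction follows the description before Theorem 1. I would take a $d$-dimensional Bell state $\ket{\Phi}_{RA}$ between a reference qudit $R$ and the input register $A$. I then apply $U^{(2,d)}_{enc}$ only to $A$ and the signals $S_1,S_2$, leaving $R$ and $N_1,N_2$ untouched. This gives a pure six-qudit state
\begin{equation}
\ket{\Psi_6}=(\mathbb{I}_R\otimes U^{(d,2)}_{enc})\,\ket{\Phi}_{RA}\otimes\ket{\Phi}_{S_1N_1}\otimes\ket{\Phi}_{S_2N_2}.
\end{equation}
By Theorem 1, $R$ plays the role of the secret, so the claim reduces to showing that $\ket{\Psi_6}$ is $\mathrm{AME}(6,d)$, possibly after a local unitary. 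The local unitary freedom in the statement covers any basis change on $R$ needed to match the AME basis-state form in equation (11).

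The first step is to write $\ket{\Psi_6}$ explicitly, as in the $n=2$ formula preceding Theorem 2, with $\psi_k$ replaced by $\frac{1}{\sqrt d}\ket{k}_R$:
\begin{equation}
\ket{\Psi_6}=\frac{1}{d^{5/2}}\sum_{a,b}\tau^{a^2+b^2}\sum_{k,j_1,j_2}\omega^{b(k-j_1-j_2)}\ket{k}_R\ket{k+a,j_1-a,j_2-a}_{AS_1S_2}\ket{j_1,j_2}_{N_1N_2}.
\end{equation}
By the criterion quoted in Section II C, it suffices to check that all $\binom{6}{3}=20$ three-qudit marginals are maximally mixed. Equivalently, for each tripartition $E|E^c$ with $|E|=3$, I would show that the $d^3$ vectors $\ket{\Psi(k)}_{E^c}$ obtained by fixing the computational labels on $E$ are orthonormal (up to the $d^{-3/2}$ factor).

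I would organise this using symmetries. The state is symmetric under swapping the pairs $(S_1N_1)\leftrightarrow(S_2N_2)$. In addition, taking complements maps a marginal on $E$ to one on $E^c$ with the same spectrum, so only 10 complementary pairs matter. Modulo the pair swap, these fall into a few classes. The first class is sets containing $R$ and $A$. The second is sets containing $R$ and a single signal or noise qudit, where Lemma 1 transfers operators between $S_i$ and $N_i$. The third is sets like $\{S_1,N_1,X\}$ containing a full Bell pair. In each class, the plan is to compute $\Tr_{E^c}$ by summing over the free indices. The Fourier sums $\sum_b\omega^{b(\cdots)}$ and the Hilbert–Schmidt orthogonality of equation (\ref{eq:HS_ortho}) should force $a=a'$ and $b=b'$. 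After that, the quadratic phases $\tau^{a^2+b^2}$ cancel and leave $\mathbb{I}/d^3$.

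The main obstacle is the class where $E$ contains a complete Bell pair, such as $E=\{S_1,N_1,S_2\}$ or $\{R,S_1,N_1\}$. These are exactly the cuts along which the Schmidt-rank argument in the proof of Theorem 2 bounds the rank by $d^2<d^3$. For example, $\{R,A,\cdot\}$ versus $\{S_1N_1,\cdot\}$ style cuts with all pairs unsplit would be fatal. I would therefore first check the cut $\{R,S_1,N_1\}|\{A,S_2,N_2\}$. Here the Bell pair $RA$ is split, so the rank bound only gives $d^2\cdot d^2$, which is not yet a contradiction, and the explicit computation must supply full rank $d^3$. If some cut fails, I would use the allowed local unitary to repair it. One option is to encrypt $R$'s partner differently, for instance by applying a Fourier gate on $R$ or replacing one $W^\dagger$ by $W^{\mathsf T}$ on a noise qudit via Lemma 1, and then re-verify. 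Once all 20 marginals are shown maximally mixed, Theorem 1 with $m'=3$ gives the bijection with a pure $((3,5))$ scheme on shares $A,S_1,N_1,S_2,N_2$, valid for every $d\ge 2$.
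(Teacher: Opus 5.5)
Your main line is sound and has the same skeleton as the paper's proof: the same six-qudit state (the paper's $\ket{\Psi'_{enc}}$ with $A_1=R$, $A_2=A$), reduction to the ten three-qudit marginals containing $R$ (complements give the other ten), and Theorem 1 with $m'=3$.

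\textbf{Where you differ.} The routes part in how those marginals are certified. The paper expands $\ket{\Psi'_{enc}}=d^{-1/2}\sum_r\ket{\chi'_r}\ket{\Psi_r}$ in the Fourier basis. It uses the Weyl covariance of Appendix I, which makes each $\ket{\Psi_r}$ locally equivalent to $\ket{\Psi_{enc}}$, together with Theorem 2 to get the diagonal blocks $\Tr_{E^c}\ket{\Psi_r}\bra{\Psi_r}=\mathbb{I}/d^2$ for free. It then only has to argue that the blocks with $r\neq s$ vanish. This buys a conceptual picture: the QSS encoder is literally the encrypted-cloning isometry $V$, and the six-party state is a coherent extension of five-party AME states. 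You compute the marginals directly in the computational basis of $R$, which is the same family of conditions after a basis change on $R$. Your route is self-contained, with no covariance lemma and no reliance on Theorem 2 or Appendix G. Once it is done, conditioning $R$ on an arbitrary state shows that $V\ket{\psi}$ is $\mathrm{AME}(5,d)$ for every input $\ket{\psi}$, not only the uniform one.

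\textbf{Your obstacle paragraph is mistaken.} Each of the Bell pairs $RA$, $S_1N_1$, $S_2N_2$ has two members, so no $3|3$ cut leaves all three intact. Every cut splits at least one pair, so the Theorem 2 Schmidt-rank bound is at least $d\cdot d^2=d^3$ and never excludes anything. For $\{S_1,N_1,S_2\}$ and $\{R,S_1,N_1\}$ the bound is exactly $d\cdot d^2$, not $d^2$ or $d^2\cdot d^2$.

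\textbf{The fallback could never work.} Marginal spectra are invariant under local unitaries, so a Fourier gate on $R$ changes nothing. By Lemma 1, trading $W^\dagger$ on $S_i$ for its transpose on $N_i$ reproduces the identical state. Had a cut failed, the construction would simply be dead. The computation therefore has to succeed outright, and it does.

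\textbf{A simplification.} The computation is easiest if you do the $b$-sum first. Since $\tau^{x^2}$ is $d$-periodic, $\sum_b\tau^{b^2}\omega^{bs}=\sqrt d\,\gamma\,\tau^{-s^2}$ with $|\gamma|=1$, so
\[
\ket{\Psi_6}=\frac{\gamma}{d^{2}}\sum_{k,a,j_1,j_2}\tau^{a^2-(k-j_1-j_2)^2}\ket{k}_R\ket{k+a}_A\ket{j_1-a,j_2-a}_{S_1S_2}\ket{j_1,j_2}_{N_1N_2}.
\]
For each of the ten sets $\{R\}\cup E$, the Kronecker constraints from the partial trace leave a single root-of-unity sum over $a$ or $j_2$. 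That sum kills every off-diagonal term and gives $\mathbb{I}/d^3$ for all $d\ge 2$.
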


We provide a sketch of the proof Theorem 3 as a coherent extension of Theorem 2.

\begin{proof}[Sketch of proof]
Let
\begin{equation}
V:\mathcal H_d\longrightarrow\mathcal H_d^{\otimes5}
\end{equation}
denote the encrypted-cloning isometry
\begin{equation}
V\ket{\psi}_A
=U_{enc}^{(d,2)}
\left(
\ket{\psi}_A\otimes
\ket{\Phi}_{S_1N_1}\otimes
\ket{\Phi}_{S_2N_2}
\right).
\label{eq:encoding_isometry}
\end{equation}

Consider the Fourier basis
\begin{equation}
\ket{\chi_r}
=\frac{1}{\sqrt d}\sum_{k\in\mathbb Z_d}\omega^{rk}\ket{k}=\sum_k \psi_k\ket{k},
\qquad r\in\mathbb Z_d,
\end{equation}
and define
\begin{equation}
\ket{\Psi_r}=V\ket{\chi_r}.
\end{equation}
The state $\ket{\chi_0}$ is the uniform input used in Theorem 2. Since
$\ket{\chi_r}=\Zop^r\ket{\chi_0}$ where $\Zop$ is the generalised Pauli operator as mentioned earlier, Weyl-covariance of the encryption unitary gives, up to an irrelevant scalar phase,
\begin{equation}
\begin{split}
&U_{enc}^{(d,2)}(\Zop_A^r\otimes \mathbb{I}_{S_1}\otimes \mathbb{I}_{S_2})U_{enc}^{(d,2)\dagger}\\
&=
\left[
W^{(A)}(1,1)W^{(S_1)}(-1,0)W^{(S_2)}(-1,0)
\right]^r
\label{eq:Fourier_covariance}
\end{split}
\end{equation}
, as demonstrated in Appendix I. The operator on the right-hand side is a tensor product of local unitaries. Consequently, every $\ket{\Psi_r}$ is locally unitarily equivalent to $\ket{\Psi_{enc}}$. By Theorem 2, every $\ket{\Psi_r}$ is therefore an $\mathrm{AME}(5,d)$ state.

Let $E$ be any two-party subset of the five encrypted systems and $E^c$ be its complement. Theorem 2 gives the diagonal part of the required marginal relation. To extend it, one can retain the terms with different Fourier labels and trace out the complement of $E$. For every two-party subsystem
$E\subset\{A,S_1,N_1,S_2,N_2\}$, 
\begin{equation}
\operatorname{Tr}_{E^c}
\left(
\ket{\Psi_r}\!\bra{\Psi_s}
\right)
=\delta_{r,s}\frac{\mathbb I_E}{d^2},
\qquad |E|=2.
\label{eq:coherent_AME5_condition}
\end{equation}
For $r=s$, equation (30) is essentially the maximally mixed two-party marginal established in Theorem 2. For $r\neq s$, the Kronecker constraints from the partial trace leave a complete root-of-unity sum,
\begin{equation}
\sum_{q\in\mathbb Z_d}\omega^{(r-s)q}
=d\,\delta_{r,s},
\end{equation}
so the off-diagonal reduced operator vanishes. 

We now consider a $d$-dimensional Bell state
$\ket{\Phi}_{A_1A_2}=d^{-1/2}\sum_{j=0}^{d-1}\ket{jj}_{A_1A_2}$ and apply the encryption unitary to $A_2,S_1,S_2$:
\begin{equation}
\label{eq:u_enc}
\begin{split}
U'^{(d,2)}_{\text{enc}} &= \frac{1}{d} \sum_{a,b \in \mathbb{Z}_d} \varphi^{-1}(a,b) \,  W^{(A_2)}(a,b) \\
&\quad \otimes \bigotimes_{i=1}^{2} \left(W^{(S_i)}(a,b)\right)^{\dag}, 
\end{split}
\end{equation}
which acts exclusively on register $A_2$ and the signal qudits $\{S_1,S_2\}$. This yields the partially encrypted state
\begin{equation}
\begin{split}
\label{eq:psi_enc}
&|\Psi'_{\text{enc}}\rangle \\
&= \left(\mathbb{I}_{A_1} \otimes U'^{(d,2)}_{\text{enc}} \otimes \mathbb{I}_{N_1N_2}\right)\left(|\Phi\rangle_{A_1A_2} \otimes \bigotimes_{i=1}^{2}|\Phi\rangle_{S_iN_i}\right).
\end{split}
\end{equation}

Writing the bell state $\ket{\Phi}_{A_1A_2}$ as a Fourier-basis Schmidt decomposition yields
\begin{equation}
\ket{\Phi}_{A_1A_2}
=\frac{1}{\sqrt d}\sum_{r\in\mathbb Z_d}
\ket{\chi_r'}_{A_1}\ket{\chi_r}_{A_2},
\end{equation}
we may equivalently write
\begin{equation}
\ket{\Psi'_{enc}}
=\frac{1}{\sqrt d}\sum_{r\in\mathbb Z_d}
\ket{\chi_r'}_{A_1}\ket{\Psi_r}_{A_2S_1N_1S_2N_2}.
\label{eq:coherent_AME_extension}
\end{equation}

\begin{figure}[h]
    \centering
    \includegraphics[width=0.5\textwidth]{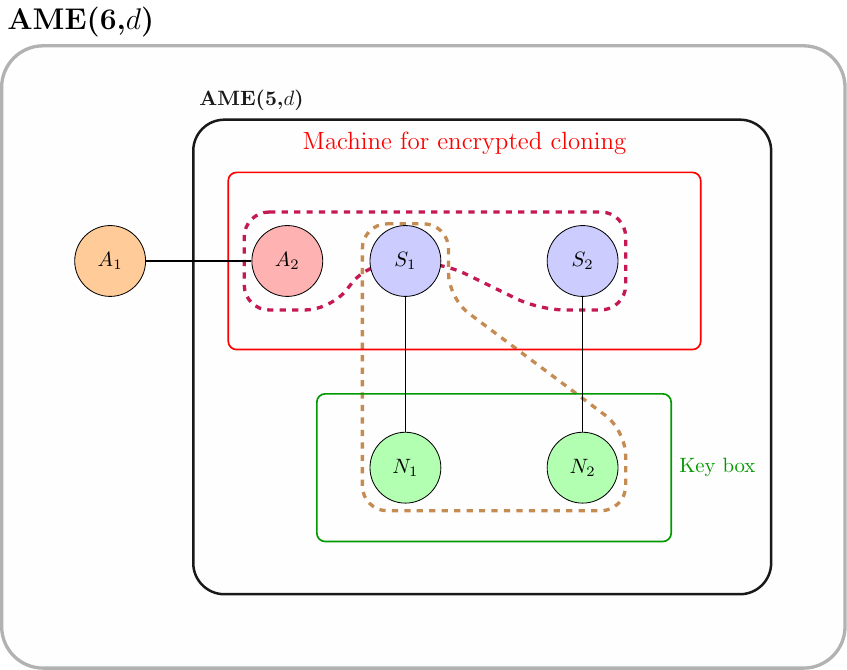}
    \caption{A schematic diagram of the partial encrypted cloning protocol for $n=2,d=2$. The machine contains the signal qubits and register $A_2$, on which the encryption operator acts, while the key box contains the noise qubits. Conditioning the reference register $A_1$ on a Fourier-basis outcome prepares one of the five-party AME states $\ket{\Psi_r}$. Their coherent extension with $A_1$ forms the $\mathrm{AME}(6,2)$ state. The brown dotted lines show an example of an authorised set $\{S_1,N_1,N_2\}$, while the purple dotted lines show an example of an unauthorised set $\{A_2,S_2\}$, illustrating a $((3,5))$ threshold QSS scheme.}
    \label{fig:a}
\end{figure}

 Using equation (30), the reduced state on $A_1E$, where $A_1E$ is a three-party subset of the six partially encrypted system, is
\begin{equation}
\begin{split}
\rho_{A_1E}
&=\frac{1}{d}\sum_{r,s\in\mathbb Z_d}
\ket{\chi_r'}\!\bra{\chi_s'}_{A_1}
\otimes
\operatorname{Tr}_{E^c}
\left(\ket{\Psi_r}\!\bra{\Psi_s}\right)\vert_{r=s}\\
&=\frac{1}{d}\sum_{r\in\mathbb Z_d}
\ket{\chi_r'}\!\bra{\chi_r'}_{A_1}
\otimes\frac{\mathbb I_E}{d^2}\\
&=\frac{\mathbb I_{A_1E}}{d^3}.
\end{split}
\end{equation}
Hence every three-party marginal containing $A_1$ is maximally mixed. Since $\ket{\Psi'_{enc}}$ is pure, the reduced state on the complementary three-party subsystem has the same nonzero spectrum and is also equal to $\mathbb I_{d^3}/d^3$. Every three-party subset either contains $A_1$ or is complementary to one that contains $A_1$. Therefore every three-party marginal is maximally mixed, and $\ket{\Psi'_{enc}}$ is an $\mathrm{AME}(6,d)$ state.

Finally, we invoke Theorem 1, which dictates a one to one correspondence (up to local unitaries) between an $\text{AME}(2m',d)$ state and a pure-state $((m', 2m'-1))$ threshold QSS scheme. Setting $m'=3$, our constructed $\text{AME}(6,d)$ state uniquely maps to a $((3,5))$ threshold QSS scheme with secret and share dimensions $d$, where $d\geq 2$.
\end{proof}

To further support our argument, let us provide an example where an AME$(6,2)$ is constructed from AME$(5,2)$ states. Note that the construction in the Example can be adapted to arbitrary dimension:

\begin{example*}
In this example, we show how the encrypted state, which can be used to construct an AME$(5,2)$ as seen above, can be used to construct an AME$(6,2)$ state. Expanding the encrypted state, given by equation (21), for $d=2$ gives:

\begin{equation}
\begin{split}
\lvert \Psi_{enc} \rangle
= \frac{1-i}{4\sqrt{2}} \Big[ 
& \lvert 00000 \rangle 
+ \lvert 01111 \rangle 
+ \lvert 10101 \rangle 
+ \lvert 11010 \rangle \\
+ &\lvert 11001 \rangle 
+ \lvert 10110 \rangle 
+ \lvert 01100 \rangle 
+ \lvert 00011 \rangle 
\Big] \\
+ \frac{1+i}{4\sqrt{2}} \Big[
& \lvert 00101 \rangle 
+ \lvert 01010 \rangle 
+ \lvert 10000 \rangle 
+ \lvert 11111 \rangle 
\Big] \\
+ \frac{-1-i}{4\sqrt{2}} \Big[
& \lvert 11100 \rangle 
+ \lvert 10011 \rangle 
+ \lvert 01001 \rangle 
+ \lvert 00110 \rangle 
\Big]
\end{split}
\end{equation}

Recall from Theorem 2, $\ket{\Psi_{enc}}$ given by equation (21) is an AME$(5,2)$ state. From the result of ~\cite{AJScott, SBall}, an AME(5,2) state is equivalent to a single codeword of a pure $[[5,1,3]]$ code. Since $|\Psi_{enc}\rangle$ is an AME(5,2) state, it serves as a valid ``logical zero" $|0'_L\rangle_{AS_1S_2N_1N_2}$ for a non-standard code basis where:

\begin{equation}
\begin{split}
\ket{0'_L}_{AS_1S_2N_1N_2}&=\ket{\Psi_{enc}}\\
&=U^{(2,n)}_{enc}(\ket{+}\otimes\ket{\Phi^+}\otimes\ket{\Phi^+}.
\end{split}
\end{equation}

We can therefore construct a $\ket{1'_L}_{AS_1S_2N_1N_2}$ from:

\begin{equation}
\begin{split}
\ket{1'_L}&_{AS_1S_2N_1N_2}=U^{(2,n)}_{enc}(\ket{-}\otimes\ket{\Phi^+}\otimes\ket{\Phi^+}\\
= &\frac{1-i}{4\sqrt{2}} \Big[ |00000\rangle + |01111\rangle + |11001\rangle + |10110\rangle \Big]\\
+ &\frac{1+i}{4\sqrt{2}} \Big[|00101\rangle + |00110\rangle + |01001\rangle + |01010\rangle\Big] \\
+ & \frac{-1-i}{4\sqrt{2}}\Big[|10000\rangle + |10011\rangle + |11100\rangle + |11111\rangle\big] \\
- & \frac{1-i}{4\sqrt{2}} \Big[|00011\rangle + |01100\rangle + |10101\rangle + |11010\rangle\Big].
\end{split}
\end{equation}

Note that this is not a unique construction. One can easily check that $\bra{0'_L}\ket{0'_L}=\bra{1'_L}\ket{1'_L}=1$ and using unitarity of $U^{(2,n)}_{enc}$, $\bra{0'_L}\ket{1'_L}=0$. One can also utilise the receipe shown in Appendix G to show that $\ket{1'_L}$ is also an AME$(5,2)$ state. Then, using the receipe of Theorem 1, we can obtain the AME$(6,2)$ state using:

\begin{equation}
\ket{AME(6,2)}=\frac{1}{\sqrt{2}}(\ket{0}\ket{0'_L}+\ket{1}\ket{1'_L})\\
\end{equation}

Since $\ket{0_L'}$ and $\ket{1'_L}$ form an orthonormal pair of AME$(5,2)$ states, which span the codeword space of a $[[5,1,3]]$ ~\cite{AJScott, SBall}, this enables the construction of AME$(6,2)$, which one can verify computationally. In fact, equation (40) related to equation (33) via a local unitary transformation on the register $A$.

\end{example*}

In the Example, we have shown how to construct an AME$(6,2)$ state using an orthogonal pair of encrypted states, which are by themselves AME$(5,2)$ states. Therefore, by Theorem 1, there exist a one to one correspondence between a partially encrypted state with $n=2$ signal and noise qubits and a pure state $((3,5))$ threshold QSS scheme. We can apply a similar construction to an arbitrary dimension by starting with the AME$(5,d)$ given in equation (21), following Theorem 2 and thereafter, following the recipe of Theorem 1 to construct an AME$(6,d)$ state and relate it to the QSS ((3,5)) threshold scheme.

The relationship established between the $n=2$ partially encrypted state and the $((3,5))$ threshold QSS scheme, as illustrated in Figure 1, provides a physical intuition for the ``sharing" of quantum information. In the context of encrypted cloning, the machine does not merely copy a state; it distributes the information of the input state across a multipartite entangled system such that the information is hidden from local parties but remains globally accessible.

\section{Conclusion}

In this manuscript, we have shown that the encrypted qubit cloning protocol can be generalised to qudits of arbitrary dimension using the Weyl-Heisenberg operator and the proposed encryption and decryption operators are shown to be consistent to those introduced by Yamaguchi and Kempf at dimension $d=2$. We realised that, like $d=2$, even if some of the noise qudits are lost, one can still recover the original input state by sacrificing some of the signal qudits, albeit with a slightly different decryption unitary operator. We have also studied the concept of AME states and QSS and demonstrated that the an encrypted state comprising of two pairs of signal and noise qudits is equivalent to a five-party AME state when the input state to be cloned is uniform.

Furthermore, we investigated the relationship between the encrypted state and QSS scheme. We proved that for arbitrary dimension $d\geq 2$, the encrypted state is an AME$(5,d)$ state for $n=2$ provided the input state to be cloned is uniform. On the other hand, for $n > 2$, the process does not result in an AME$(2n+1, d)$ state, with both results established in Theorem 2. To highlight the connection to QSS, we proved that when the input is a $d$-dimensional Bell state subjected to partial encryption, where only one of the registers is encrypted, there exists a one to one correspondence between the $n=2$ encrypted cloning protocol and a pure-state $((3,5))$ threshold QSS scheme for arbitrary $d \geq 2$, as established in Theorem 3. 

In the future work, we would like to explore the usage of encrypted cloning as an error correction code and examine the relationship between stabilzer codes and encrypted states ~\cite{Pawel}. Encrypted Cloning of qudits also provides a natural route towards blind quantum computing (BQC), where Weyl-masked qudits (qudits that are encrypted by the Weyl-Heisenberg operator) act as encrypted inputs transmitted to a remote server. The server performs computations directly on the masked data, while only the client, who retains the noise register keys, can resolve the final phases and recover the output. Thereafter, one can consider using the encrypted state for $n=2$ to correct the errors from the computation.

On the implementation side, simulating these high-dimensional encrypted cloning on advanced superconducting platforms such as the IBM Heron processor will be essential \cite{Kempf2}. Studying performance under hardware specific noise profiles will allow one to identify the gate error thresholds, thereby establishing practical limits for secure, high-dimensional blind quantum computation. 

\section*{Acknowledgement}

We thank Koji Yamaguchi and Achim Kempf for the insightful discussion. We would also like to thank Xing Jian Zhang for proof reading the draft. This project is supported by the National Research Foundation, Singapore through the National Quantum Office, hosted in A*STAR, under its Centre for Quantum Technologies Funding Initiative. Hoi-Kwong Lo acknowledges research support from the
National University of Singapore Start-up grant, the National Research Council of Canada High Throughout Secure Networks (HTSN) program, Canadian Foundaton for Innovation (CFI) and the Natural Sciences and Engineering Research Council of Canada (NSERC) Discovery Grant.

\section*{Appendix}

\subsection*{Appendix A: Proof of Lemma 1}

\begin{proof}
Let $M = \sum_{i,k} M_{i,k} |i\rangle\langle k|$ be an arbitrary linear operator. Acting on the first subsystem of the maximally entangled state, we have:
\begin{align*}
    (M \otimes \mathbb{I}) |\Phi_d\rangle &= \frac{1}{\sqrt{d}} \sum_j M |j\rangle \otimes |j\rangle \\
    &= \frac{1}{\sqrt{d}} \sum_j \left( \sum_{i,k} M_{i,k} |i\rangle\langle k| \right) |j\rangle \otimes |j\rangle \\
    &= \frac{1}{\sqrt{d}} \sum_{i,j} M_{i,j} |i\rangle \otimes |j\rangle \label{eq:basis_expansion}
\end{align*}
Using the definition of the transpose, $(M^{\mathsf{T}})_{j,i} = M_{i,j}$. We associate the coefficients with the second subsystem:
\begin{align*}
    (M \otimes \mathbb{I}) |\Phi_d\rangle &= \frac{1}{\sqrt{d}} \sum_{i,j} |i\rangle \otimes M_{i,j} |j\rangle \\
    &= \frac{1}{\sqrt{d}} \sum_i |i\rangle \otimes \left( \sum_j M^{\mathsf{T}}_{j,i} |j\rangle \right) \\
    &= (I \otimes M^{\mathsf{T}}) |\Phi_d\rangle
\end{align*}

We now evaluate $W(a,b)^{\mathsf{T}}$. Recall that $W(a,b) = \tau^{ab} \Xop^a \Zop^b$. Given the properties $\Zop^{\mathsf{T}} = \Zop$ and $\Xop^{\mathsf{T}} = \Xop^{-1}$, and noting that the transpose of a product reverses the order, we find:
\begin{equation*}
    W(a,b)^{\mathsf{T}} = (\tau^{ab} \Xop^a \Zop^b)^{\mathsf{T}} = \tau^{ab} (\Zop^b)^{\mathsf{T}} (\Xop^a)^{\mathsf{T}} = \tau^{ab} \Zop^b \Xop^{-a}
\end{equation*}
Since $\Zop^b \Xop^a = \omega^{ab} \Xop^a \Zop^b$, this implies $\Zop^b \Xop^{-a} = \omega^{-ab} \Xop^{-a} \Zop^b$. Substituting this in:
\begin{equation*}
    W(a,b)^{\mathsf{T}} = \tau^{ab} \omega^{-ab} \Xop^{-a} \Zop^b
\end{equation*}
Recalling that $\omega = \tau^2$ (where $\tau = e^{i\pi(d+1)/d}$ and $\omega=e^{i 2\pi/d}$), we have $\omega^{-ab} = \tau^{-2ab}$:
\begin{equation*}
    W(a,b)^{\mathsf{T}} = \tau^{ab} \tau^{-2ab} \Xop^{-a} \Zop^b = \tau^{-ab} \Xop^{-a} \Zop^b = W(-a,b)
\end{equation*}
Combining the two stages yields:
\begin{equation*}
\begin{split}
    (W(a,b) \otimes \mathbb{I}) |\Phi_d\rangle = (\mathbb{I}\otimes W(a,b)^{\mathsf{T}}) |\Phi_d\rangle\\
    =(\mathbb{I} \otimes W(-a,b)) |\Phi_d\rangle
\end{split}    
\end{equation*}
\end{proof}
\subsection*{Appendix B: Proof of Theorem 1}

\begin{proof}
AME$\implies$ QSS: Let $|\boldsymbol{\Psi}\rangle \in \mathcal{H}_d^{\otimes 2m'}$ be an AME state. We partition the $2m'$ parties into two sets of size $m'$: $B' = \{D', B'_1, \dots, B'_{m'-1}\}$ and $A' = \{A'_1, \dots, A'_{m'}\}$. The state can be Schmidt-decomposed across the bipartition as:
\begin{equation*}
    |\boldsymbol{\Psi}\rangle = \frac{1}{\sqrt{d^{m'}}} \sum_{i \in \mathbb{Z}_d} \sum_{k \in \mathbb{Z}_d^{m'-1}} |i\rangle_D |k_1 \dots k_{m'-1}\rangle_{B'} |\Psi(i, k)\rangle_{A'}
\end{equation*}
Since $\rho_{B'} = \text{Tr}_{A'}(|\boldsymbol{\Psi}\rangle\langle\boldsymbol{\Psi}|) = \frac{\mathbb{I}}{d^{m'}}$, the set $\{|\Psi(i, k)\rangle_{A'}\}$ constitutes an orthonormal basis for $\mathcal{H}_{A'}$, satisfying $\langle \Psi(i, k) | \Psi(j, k') \rangle = \delta_{ij}\delta_{kk'}$.

To encode a secret $|a\rangle = \sum a_i |i\rangle$, the Dealer projects their system $D'$ onto the basis $\{|i\rangle\}$, yielding the QSS encoded state:
\begin{equation*}
\begin{split}
    |a\rangle = &\sum_i a_i |\boldsymbol{\Psi}_i\rangle,\,\ \\
    &\text{where } |\boldsymbol{\Psi}_i\rangle = \frac{1}{\sqrt{d^{m'-1}}} \sum_k |k\rangle_{B'} |\Psi(i, k)\rangle_{A'}
\end{split}    
\end{equation*}

For any $m'-1$ parties in $B' \setminus \{D'\}$, the reduced density matrix is:
\begin{equation*}
    \rho_{B' \setminus \{D'\}} = \text{Tr}_{A'} (|a\rangle\langle a|) = \frac{1}{d^{m'-1}} \sum_k |k\rangle\langle k| = \frac{\mathbb{I}}{d^{m'-1}}
\end{equation*}
This is independent of $a_i$, ensuring zero information leakage to unauthorised parties.

Since $\{|\Psi(i, k)\rangle_{A'}\}$ is orthonormal, there exists a unitary $U_{A'}$ such that $U_{A'} |\Psi(i, k)\rangle_{A'} = |k_1\rangle_{A'_1}\dots|k_{m'-1}\rangle_{A'_{m'-1}} |i\rangle_{A'_{m'}}$. Applying $U_{A'}$ to the shares in $A'$ recovers the secret:
\begin{equation*}
\begin{split}
    (\mathbb{I}_{B'} \otimes U_{A'}) |a\rangle &= \left( \frac{1}{\sqrt{d^{m'-1}}} \sum_k |k\rangle_{B'} |k\rangle_{A' \setminus \{A'_{m'}\}} \right)\\
    &\otimes \sum_i a_i |i\rangle_{A'_{m'}}
\end{split}    
\end{equation*}
The secret is recovered on party $A'_{m'}$.

QSS $\implies$ AME:
Consider a QSS scheme that encodes a secret $i$ into states $|\boldsymbol{\Psi}_i\rangle$ across $2m'-1$ parties. We define a global state $|\boldsymbol{\Psi}\rangle$ by entangling an auxiliary system $D'$ (the Dealer):
\begin{equation}
    |\boldsymbol{\Psi}\rangle = \frac{1}{\sqrt{d}} \sum_{i \in \mathbb{Z}_d} |i\rangle_D |\boldsymbol{\Psi}_i\rangle
\end{equation}
Expanding $|\boldsymbol{\Psi}_i\rangle$ in terms of the unauthorised shares $k$ and authorised shares $A'$:
\begin{equation*}
    |\boldsymbol{\Psi}\rangle = \frac{1}{\sqrt{d^{m'}}} \sum_{i, k} |i\rangle_{D'} |k\rangle_{B'} |\Psi(i, k)\rangle_{A'}
\end{equation*}
To recover $i$ deterministically from $A'$, we must have $\langle \phi(k, i) | \phi(k, j) \rangle = \delta_{ij}$. Also, to ensure $B'$ gains no information, $\rho_B$ must be maximally mixed, requiring $\langle \phi(k, i) | \phi(k', i) \rangle = \delta_{kk'}$.

Thus, $\langle \phi(i, k) | \phi(j, k') \rangle = \delta_{ij}\delta_{kk'}$. This implies that for the bipartition $B' \cup \{D'\}$ vs $A'$, the state is maximally entangled. Since the choice of authorised/unauthorised sets in a threshold scheme is arbitrary for any set of size $m'$, $|\boldsymbol{\Psi}\rangle$ is maximally entangled with respect to any bipartition into two cardinality $m'$ sets, satisfying the definition of an $AME(2m', d)$ state.

\end{proof}

\subsection*{Appendix C: Proof of Unitarity of generalised encryption and decryption operator}
\textbf{Proof of unitarity of generalised encryption operator:}
\begin{proof}
Let
$\widetilde{W}(a,b) \equiv W^{(A)}(a,b)\otimes\bigotimes_{i=1}^n (W^{(S_i)}(a,b))^{\dag}$
denote the full tensor product Weyl operator over all $n+1$ subsystems.

Therefore,
\begin{equation*}
\begin{split}
    U_{enc}^{(d,n)}U_{enc}^{(d,n)\dagger}
    &= \frac{1}{d^2}
    \sum_{a,b,a',b'\in\mathbb{Z}_d}\\
    &\varphi^{-1}(a,b)\,\varphi(a',b')\;
    \widetilde{W}(a,b)\,\widetilde{W}^\dagger(a',b').
\end{split}    
\end{equation*}

Using $W^\dagger(a',b')=W(-a',-b')$ and the Weyl composition rule
\begin{equation*}
\begin{split}
    W(a,b)\,W(a',b') = \tau^{a'b - ab'}\,W(a+a',b+b'),\\
    W^{\dag}(a,b)\,W^{\dag}(a',b')=\tau^{a'b - ab'}\,W^{\dag}(a+a',b+b'),
\end{split}    
\end{equation*}

applied to each of the $n+1$ tensor factors:
\begin{equation*}
    \widetilde{W}(a,b)\,\widetilde{W}^\dagger(a',b')
    = \tau^{(n+1)(ab'-a'b)}\,\widetilde{W}(a-a',\,b-b').
\end{equation*}

Substituting $a'=a-m,\; b'=b-p$:
\begin{equation*}
    U_{enc}^{(d,n)}U_{enc}^{(d,n)\dagger}
    = \frac{1}{d^2}
    \sum_{m,p\in\mathbb{Z}_d}
    S(m,p)\;
    \widetilde{W}(m,p),
\end{equation*}
where
\begin{equation*}
    S(m,p)
    \equiv
    \sum_{a,b\in\mathbb{Z}_d}
    \varphi^{-1}(a,b)\,\varphi(a-m,\,b-p)\;
    \tau^{(n+1)(mb-ap)}.
\end{equation*}

Since $\varphi(a,b)=\tau^{-(a^2+b^2-(n+1)ab)}$:
\begin{align*}
    \varphi^{-1}(a,b)&\,\varphi(a-m,b-p)
    = \tau^{a^2+b^2-(n+1)ab}
       \cdot\\
       &\tau^{-\left[(a-m)^2+(b-p)^2-(n+1)(a-m)(b-p)\right]} \notag\\
    &= \tau^{2am - m^2 + 2bp - p^2 - (n+1)(ap + bm - mp)}.
\end{align*}

Hence, $S(m,p)$ factorises as:
\begin{equation*}
\begin{split}
    S(m,p)
    =&
    \tau^{-m^2-p^2+(n+1)mp}
    \left(\sum_{a\in\mathbb{Z}_d}\tau^{2a(m-(n+1)p)}\right)\\
    &\times \left(\sum_{b\in\mathbb{Z}_d}\tau^{2bp}\right).
\end{split}    
\end{equation*}

Since $\tau^2 = \omega$, each sum is a standard sum over the root of unity where,
\begin{equation*}
    \sum_{b\in\mathbb{Z}_d}\tau^{2bp}
    = \sum_{b\in\mathbb{Z}_d}\omega^{bp}
    = d\,\delta_{p,0}.
\end{equation*}

which enforces $p = 0$, and the sum over $a$ reduces to:
\begin{equation*}
    \left.\sum_{a\in\mathbb{Z}_d}\tau^{2a(m-(n+1)p)}\right|_{p=0}
    = \sum_{a\in\mathbb{Z}_d}\omega^{am}
    = d\,\delta_{m,0}.
\end{equation*}

Therefore:
\begin{equation*}
    S(m,p) = d^2\,\delta_{m,0}\,\delta_{p,0}.
\end{equation*}
Which implies that
\begin{equation*}
\begin{split}
    U_{enc}^{(d,n)}U_{enc}^{(d,n)\dagger}
    &= \frac{1}{d^2}
    \sum_{m,p\in\mathbb{Z}_d}
    d^2\,\delta_{m,0}\,\delta_{p,0}\;
    \widetilde{W}(m,p)\\
    &= \widetilde{W}(0,0)
    = \mathbb{I}.
\end{split}    
\end{equation*}

Since $W^{(A)}(0,0)\otimes\bigotimes_i (W^{(S_i)}(0,0))^{\dag} = \mathbb{I}$, we conclude:
\begin{equation*}
U_{enc}^{(d,n)}\,U_{enc}^{(d,n)\dagger} = \mathbb{I}.
\end{equation*}
The same computation for $U_{enc}^{(d,n)\dagger} U_{enc}^{(d,n)}$, swapping the roles
of $(a,b)$ and $(a',b')$, proceeds identically, confirming that the generalised encryption operator is unitary.

\end{proof}

\textbf{Proof of unitarity of generalised decryption operator:}

\begin{proof}
Let the  the Bell state projector for $S_lN_l$: $\ket{\Psi_{-a,-b}}\!\bra{\Psi_{-a,-b}}_{S_lN_l}=P_{a,b}$,

Therefore,
\begin{equation*}
\begin{split}
  U^{(d,n)}_{dec}&U^{(d,n)\dag}_{dec}
  = \sum_{a,b}\sum_{a',b'}
     \varphi(a,b)\,\varphi^{*}(a',b')\;
     (P_{a,b}\,P_{a',b'})_{S_lN_l}\\
     &\;\otimes\;
     \bigotimes_{i\neq l}^{n}
     \Bigl[
       W(a,b)^{\mathsf{T}}
       \bigl(W(a',b')^{\mathsf{T}}\bigr)^{\!\dagger}
     \Bigr]^{(N_j)}.
     \label{eq:expand}
\end{split}     
\end{equation*}

where $P_{a,b}\,P_{a',b'} = \delta_{a,a'}\,\delta_{b,b'}\,P_{a,b}$ which enforces $a=a'$ and $b=b'$,

Since the Weyl-Heisenberg displacement operator $W$ is unitary:
\begin{equation*}
\begin{split}
  W^{\mathsf{T}}\bigl(W^{\mathsf{T}}\bigr)^{\!\dagger}
  &= W^{\mathsf{T}}\bigl(W^{\dagger}\bigr)^{\mathsf{T}}
  = W^{\mathsf{T}}\bigl(W^{-1}\bigr)^{\mathsf{T}}\\
  &= \bigl(W^{-1}W\bigr)^{\mathsf{T}}
  = \mathbb{I}.
\end{split}  
\end{equation*}

\noindent Therefore,
\begin{align*}
  U^{(d,n)}_{dec}U^{(d,n)\dag}_{dec}
  &= \sum_{a,b} |\varphi(a,b)|^2\;P_{a,b}
     \;\otimes\; \mathbb{I}^{\otimes(n-1)}
  \notag\\[4pt]
  &= \left(\,\sum_{a,b\in\ZZ_d} \ket{\Psi_{-a,-b}}\!\bra{\Psi_{-a,-b}}\right)
     \otimes\, \mathbb{I}^{\otimes(n-1)},
  \label{eq:collapsed}
\end{align*}
where in the second line we used $|\varphi(a,b)|^2=1$.
The $d^2$ generalised Bell states form a complete orthonormal basis of
$\mathcal{H}^{(S_l)}\otimes\mathcal{H}^{(N_l)}$, so:
\begin{equation*}
  \sum_{a,b\in\ZZ_d} \ket{\Psi_{-a,-b}}\!\bra{\Psi_{-a,-b}} = \mathbb{I}.
  \label{eq:completeness}
\end{equation*}
Therefore
\begin{equation*}
  U_{dec}^{(d,n)}\,\bigl(U_{dec}^{(d,n)}\bigr)^{\!\dagger}
  = \mathbb{I} \otimes \mathbb{I}^{\otimes(n-1)}
  = \mathbb{I}. 
\end{equation*}
again, one can easily show that $U^{(d,n)\dag}_{dec}U^{(d,n)}_{dec}=\mathbb{I}$ as well using the same derivation.

\end{proof}

\subsection*{Appendix D: Validity of unitaries}

Consider the initial state $\ket{\Psi_{0}}$ given by equation (12) where we let $\ket{\psi}_A=\sum_k \psi_k\ket{k}_A$. The encryption operator acts as
$U_{enc}^{(d,n)} = \frac{1}{d}\sum_{a,b\in \ZZ_d}\varphi^{-1}(a,b)\,
W^{(A)}(a,b)\otimes\bigotimes_{i=1}^{n}\left(W^{(S_i)}(a,b)\right)^{\dagger}$,
applying $W(a,b)$ to $A$ and each $S_i$, leaving the $N_i$ untouched.
We consider the action on $\ket{\psi}_A$:
\begin{equation*}
\begin{split}
  W^{(A)}(a,b)\ket{\psi}_A
  = \tau^{ab}\sum_k \psi_k\, \Xop^a \Zop^b\ket{k}\\
  = \tau^{ab}\sum_k \psi_k\,\om^{bk}\ket{k+a}.
\end{split}  
\end{equation*}

We then consider the action on each $\ket{\Phi}_{S_i N_i}$.
$\left(W^{(S_i)}(a,b)\right)^{\dag}$ acts only on $S_i$:
\begin{equation*}
\begin{split}
  \left(W^{(S_i)}(a,b)\right)^{\dag}\ket{\Phi}_{S_i N_i}
  &= \frac{\tau^{ab}}{\sqrt{d}}\sum_{j_i}\om^{-b j_i}\ket{j_i-a}_{S_i}\ket{j_i}_{N_i}\\
  &= \tau^{ab}\ket{\Psi_{-a,-b}}_{S_i N_i},
\end{split}  
\end{equation*}

Writing them together and expanding the Bell States yields $\ket{\Psi_{enc}}=U_{enc}^{(d,n)}\ket{\Psi_0}$ as:
\begin{equation*}
\begin{split}
  &\frac{1}{d}\sum_{a,b}\varphi^{-1}(a,b)\,\tau^{ab}
     \Bigl(\sum_k\psi_k\,\om^{bk}\ket{k+a}_A\Bigr)\\
     &\otimes\bigotimes_{i=1}^{n}\tau^{ab}\ket{\Psi_{-a,-b}}_{S_i N_i}\\
  &= \frac{1}{d}\sum_{a,b}\varphi^{-1}(a,b)\,\tau^{(n+1)ab}
     \sum_k\psi_k\,\om^{bk}\ket{k+a}_A\\
     &\otimes\bigotimes_{i=1}^{n}\ket{\Psi_{-a,-b}}_{S_i N_i}
\end{split}
\end{equation*}

which is also equal to:
\begin{equation*}
\begin{split}
  \frac{1}{d}\cdot&\frac{1}{d^{n/2}}
     \sum_{a,b}\varphi^{-1}(a,b)\,\tau^{(n+1)ab}
     \sum_k\psi_k\\
   & \times \sum_{j_1,\ldots,j_n}
     \om^{b(k-j_1-\cdots-j_n)}
     \ket{k+a}_A
     \otimes\bigotimes_{i=1}^{n}\ket{j_i-a, j_i}_{S_i N_i}.
\end{split}
\end{equation*}
Defining $\mathbf{j} = (j_1,\ldots,j_n)$ and
$|\mathbf{j}| = j_1+\cdots+j_n$,
\begin{equation*}
\begin{split}
  \ket{\Psi_{enc}}
  &= \frac{1}{d^{(n+2)/2}}
    \sum_{a,b}\varphi^{-1}(a,b)\,\tau^{(n+1)ab}\\
    &\sum_{k,\mathbf{j}}
    \psi_{k}\,\om^{b(k-|\mathbf{j}|)}
    \ket{k+a}_A
    \otimes\bigotimes_{i=1}^{n}\ket{j_i-a,\, j_i}_{S_i N_i}.
  \label{eq:Psi_enc}
\end{split} 
\end{equation*}

The sum over $b$ acts as a $\mathbb{Z}_d$ Fourier transform:
$\sum_b \om^{b(k-|\mathbf{j}|)} = d\,\delta_{k-|\mathbf{j}|,\,0\bmod d}$ as it is a geometric series sum of the root of unity. For the encryption to remain secure, $\varphi(a,b)$ must be chosen such that the sum enforces a correlation between $k$ and the collective indices $|\mathbf{j}|$ only when the correct decryption operator is applied. In the absence of this operator, the original information $|\psi\rangle_A$ is delocalized across the $n$-party system. Consequently, tracing out other subset of qudits, leaving only $A$, yields a maximally mixed state. To see this, we compute $\rho_{A} = \text{Tr}_{S_1,\,S_2\cdots S_n,\,N_1\cdots N_n}(\ket{\Psi_{enc}}\!\bra{\Psi_{enc}})$.

Let the primed variables $(a',b',k',\mathbf{j}')$ label the bra. The partial trace
imposes orthogonality conditions from each traced-out register:
\begin{align*}
  \text{Trace over }S_iN_i:
    \bra{j_i'-a',j_i'}\ket{j_i-a,j_i} = \delta_{j_i-a,\,j_i'-a'}\,\delta_{j_i,j_i'}.\label{eq:trSiNi}
\end{align*}
The trace condition above gives $j_i = j_i'$ and $a = a'$.
Hence $a$, $b$, $b'$, $k$, $k'$ and $j_i$ are the remaining free indices. Since the sum over the $n$ free indices $j_1, j_2,\ldots,j_n$
contributes $d^{n}$:
\begin{equation*}
  \sum_{j_1,j_2,\ldots,j_n}\om^{(b'-b)|\mathbf{j}|}
  = \prod_{i=1}^{n}\sum_{j_i=0}^{d-1}\om^{(b'-b)j_i}
  = d^{n}\,\delta_{b,b'}.
  \label{eq:ji_sum}
\end{equation*}

Hence, the reduced 
density matrix $\rho_A$, obtained by tracing over $\ket{\Psi_{enc}}$ becomes (after summing over $b'$),
\begin{equation*}
\begin{split}
  \rho_{A}
  &= \frac{1}{d^{2}}
     \sum_{a,b,k, k'}
     \varphi^{-1}(a,b)\varphi^{-1*}(a,b)\,
     \tau^{(n+1)ab}\tau^{-(n+1)ab}\,\\
     &\psi_k\psi^{*}_{k'}\,
     \om^{b(k-k')}\,
     \ket{k+a}\!\bra{k'+a}_{A}.
\end{split}     
\end{equation*}

Since $\sum_k|\psi_k|^2=1$ and the phases $\om^{b(k-k')}$
sum over $b$ to $d\,\delta_{k,k'}$, which enforces $k=k'$ and noting $|\varphi^{-1}(a,b)|^2=1$:

\begin{align*}
  \rho_{A}
  &= \frac{1}{d}
     \sum_k |\psi_k|^2\sum_{a}
     \,\ket{k+a}\!\bra{k+a}_{A} \notag\\
  &=\frac{\mathbb{I}_d}{d}.
\end{align*}

The reduced state on $A$ is the maximally mixed (white noise) state. Now we consider the reduced state of $S_1$. Here, we trace over $S_iN_i$ where $i\neq 1$ and over $A$ giving $a=a'$ =, $j_i=j'_i$ for $i\neq 1$ and $k=k'$ sum over $n-1$ free indices $j_2,j_3\cdots,j_n$ which contributes $d^{n-1}$:

\begin{equation*}
\begin{split}
\sum_{j_2,j_3,\cdots,j_n}\omega^{(b'-b)\sum^n_{i=2}j_i} =\prod^{n}_{i=2}\sum^{d-1}_{j_i=0}\omega^{(b'-b)j_i}=d^{n-1}\delta_{b,b'}
\end{split}
\end{equation*}

Hence, the reduced density matrix $\rho_{S_1}$, obtained by tracing over $\ket{\Psi_{enc}}$ becomes:

\begin{equation*}
\begin{split}
\rho_{S_1}=& \frac{1}{d^3}\sum_{k,a,b,j_1,j'_1}\varphi^{-1}(a,b)\varphi^{-1*}(a,b)\,
     \tau^{(n+1)ab}\tau^{-(n+1)ab}\,\\
     &|\psi_k|^2\omega^{b(j'_1-j_1)}\,
     \ket{j_1-a}\!\bra{j'_1-a}_{S_1}\\
     &=\frac{1}{d^2}\sum_{a,j_1,j'_1}\delta_{j_1,j'_1}\ket{j_1-a}\!\bra{j'_1-a}_{S_1}\\
     &=\frac{1}{d^2}\sum_{a,j_1}\ket{j_1-a}\!\bra{j_1-a}_{S_1}=\frac{\mathbb{I}_d}{d}.
\end{split}
\end{equation*}
Since $a, j_1\in\mathbb{Z}_d$, $\rho_{S_1}$ is also maximally mixed. Hence, due to the symmetry of the exchange of the roles of $S_i$ for all $i\in \{1,\cdots, n\}$, both $\rho_A$ and $\rho_{S_i}$ are maximally mixed, and the data remains perfectly hidden from any external observer.

We now apply the decryption unitary to decrypt the $l^{th}$ signal qudit. The operator must involve $S_l$ and the entire collection of the noise qudits $\{N_1,...,N_n\}$. Let $j'_l, a', b'$ be the indices for the $S_l, N_l$ component of the encrypted state and $j_l, a, b$ be that for the projector operator $\ket{\Psi_{-a,-b}}\bra{\Psi_{-a,-b}}_{S_lN_l}$. Then, applying the projector to the signal and noise component of $\ket{\Psi_{enc}}$ yields:

\begin{equation*}
\begin{split}
\bra{\Psi_{-a,-b}}_{S_lN_l}\left(\frac{1}{\sqrt{d}}\sum_{j'_l} \omega^{-b'.j'_l}\ket{j'_l - a'}_{S_l}\ket{j'_l}_{N_l}\right)\\
=\frac{1}{d}\sum_{j_l, j'_l\in \ZZ_d}\omega^{-b'j'_l + bj_l}\underbrace{\bra{j_l-a}\ket{j'_l-a'}}_{
\delta_{j_l-a, j'_l-a'}
}\underbrace{\bra{j_l}\ket{j'_l}}_{\delta_{j_l,j'_l}}\\
=\delta_{a,a'}\sum_{j_l}\omega^{j_l(b-b')}\times \frac{1}{d}\\
=\delta_{a,a'}\delta_{b,b'}
\end{split}
\end{equation*}

 One can view the operator $(W^{(N_i)}(a,b))^{\mathsf{T}}$ as being applied conditionally, based on the Bell basis state of the  $S_l, N_l$ subsystem. Note that $\sum_{j_l}\omega^{j_l(b-b')}$ is a geometric series sum of the root of unity. If $b\neq b'$, the term vanishes and if $b=b'$, $\omega^0=1$, which sums to $d$ (cancelling the $1/d$).

Since $(W^{(N_i)}(a,b))^{\mathsf{T}}=W^{(N_i)}(-a,b)$, we can write the output state of $U^{(d,n)}_{dec}\ket{\Psi_{enc}}$ as,

\begin{equation*}
\begin{split}
\ket{\Psi_{out}}&=\frac{1}{d}\sum_{a,b}\tau^{(n+1)ab}\sum_{k}\psi_k\,\omega^{bk}\ket{k+a}_A\otimes \\
&\ket{\Psi_{-a,-b}}_{S_lN_l}\otimes \bigotimes^n_{i\neq l}W^{(N_i)}(-a,b)\ket{\Psi_{-a,-b}}_{S_iN_i}
\end{split}
\end{equation*}

We now combine this correction with the $j$-th signal qudit $S_j$ from the encrypted state. The state of the $j$-th pair becomes (you have $n-1$ copies of them):
\begin{equation*}
\begin{split}
    W^{(N_j)}&(-a,b)\ket{\Psi_{-a,-b}}_{S_jN_j}\\
    &= \frac{\tau^{-ab}}{\sqrt{d}}\sum_{j_j} \ket{j_j-a}_{S_j}\otimes \ket{j_j-a}_{N_j}  
\end{split}    
\end{equation*}

with that, $\ket{\Psi_{out}}$ becomes

\begin{equation*}
\begin{split}
\ket{\Psi_{out}}&=\frac{1}{d^{(n+2)/2}}\sum_{a,b}\omega^{ab}\sum_{k,\mathbf{j}}\psi_k\,\omega^{b(k-j_l)}\ket{k+a}_A\otimes\\
&\ket{j_l-a}_{S_l}\otimes\ket{j_l}_{N_l}\otimes \bigotimes^n_{i\neq l}\ket{j_i-a, j_i-a}_{S_iN_i}.
\end{split}
\end{equation*}
By summing over $b$ and noting that $\sum_b \omega^{b(k-j_l+a)}=d\delta_{m,0}$ where $m=k-j_l+a=0\implies a=j_l-k$ (mod $d$), $\ket{\Psi_{out}}$ simplifies to:

\begin{equation*}
\begin{split}
\ket{\Psi_{out}}=&\frac{1}{d^{n/2}}\sum_{k, \mathbf{j}}\ket{j_l}_A\otimes \psi_k\ket{k}_{S_l}\otimes \ket{j_l}_{N_l}\\
&\otimes \bigotimes^n_{i\neq l}\ket{j_i-j_l+k, j_i-j_l+k}_{S_iN_i}.
\end{split}
\end{equation*}

The equation above provides a hint that the original input state in register $A$ can be recovered from $S_l$. This is possible because the encryption operator we have proposed in equation (13), in fact, contains the SWAP operator (see Appendix F). 

Since $\rho_{S_l}=\mathrm{Tr}_{A,S_1,..S_{l-1},S_{l+1},..S_n,N_1,..N_n}\left(\ket{\Phi_{out}}\bra{\Phi_{out}}\right)$, by tracing out the other subsystems, we have

\begin{flalign*}
  &\text{Trace over }A, N_l: \bra{j_l'}\ket{j_l} = \delta_{j_l,j_l'},\\
  &\text{Trace over }S_iN_i\ (i\neq l):\notag\\
  & \begin{aligned}[b]
    &\bra{j_i'-j'_l+k',j_i'-j'_l+k'}\ket{j_i-j_l+k,j_i-j_l+k} \\
    &= \delta_{j_i-j_l+k,j_i'-j'_l+k'}.
  \end{aligned}
\end{flalign*}

The trace condition over $S_iN_i$ gives $j_i+k=j'_i+k'$ for all $i \neq l$.
While trace condition over $A, N_l$ gives $j_l = j_l'$. Moreover, since the sum over the $(n-1)$ free indices $j_1,\ldots j_{l-1},j_{l+1},\ldots,j_n$ (which we will now call $j_i\{i\neq l\}$) contributes a factor of $d^{n-1}$.

 After working out the terms with $i\neq l$ and noting that the only remaining free index alongside are $k,k',j_l$ and $j'_l$, the density matrix becomes
\begin{equation*}
\begin{split}
  \rho_{S_l} &= \frac{1}{d} \sum_{k,k',j_l,j'_l} \psi_k \psi^*_{k'} \cdot \delta_{j_l,j'_l}|k\rangle\langle k'|
\end{split}     
\end{equation*}

Therefore, we recover:

\begin{equation*}
\rho_{S_l}=\sum_{k,k'}\psi_k\psi^*_{k'}\ket{k}\bra{k'}=\ket{\psi}\bra{\psi}_{S_l}
\end{equation*}

which holds for any initial pure state $\ket{\psi}_A$ and any $l\in\{1,\cdots, n\}$. 

\subsection*{Appendix E: Recovery of input state despite losing noise qudits}

Let the Weyl operators and phases be defined as
\[
\omega=e^{2\pi i/d},
\qquad
\tau=e^{\pi i(d+1)/d},
\qquad
W(a,b)=\tau^{ab}\mathbf{X}^a\mathbf{Z}^b.
\]

Define the phased Bell states as
\begin{equation*}
\begin{split}
\ket{\Psi^{(\tau)}_{-a,-b}}_{S_iN_i}
&=
\tau^{ab}\ket{\Psi_{-a,-b}}_{S_iN_i}\\
&=
\left(
W^{(S_i)}(a,b)^\dagger\otimes\mathbb{I}_{N_i}
\right)
\ket{\Phi}_{S_iN_i},
\end{split}
\end{equation*}
where
\[
\ket{\Phi}_{S_iN_i}
=
\frac{1}{\sqrt d}
\sum_{j_i=0}^{d-1}
\ket{j_i,j_i}_{S_iN_i}.
\]
The scalar phase does not change the corresponding projector, and
the phased Bell states satisfy
\[
\left\langle
\Psi^{(\tau)}_{-c,-e}
\middle|
\Psi^{(\tau)}_{-a,-b}
\right\rangle
=
\delta_{c,a}\delta_{e,b}.
\]

After encryption, the state is
\begin{equation*}
\begin{split}
\ket{\Psi_{enc}}
&=
U_{enc}^{(d,n)}\ket{\Psi_0}\\
&=
\frac{1}{d}
\sum_{a,b\in\mathbb{Z}_d}
\varphi(a,b)^{-1}
W^{(A)}(a,b)\ket{\psi}_A
\otimes
\bigotimes_{i=1}^{n}
\ket{\Psi^{(\tau)}_{-a,-b}}_{S_iN_i},
\end{split}
\end{equation*}
where
\[
\ket{\Psi_0}
=
\ket{\psi}_A
\otimes
\bigotimes_{i=1}^{n}
\ket{\Phi}_{S_iN_i}.
\]

Suppose that the noise qudit $N_2$ is lost. The state on the surviving
systems is
\[
\rho_{er}
=
\Tr_{N_2}
\left(
\ket{\Psi_{enc}}\bra{\Psi_{enc}}
\right).
\]
Replacing the missing action on $N_2$ by an action on its surviving
partner $S_2$ i.e. $\left(W^{(N_2)}(c,e)\right)^{\mathsf T}
\longrightarrow
W^{(S_2)}(c,e)$, the modified decryption operator for recovering the
input state on $S_1$ is
\begin{equation*}
\begin{split}
\tilde{U}_{dec}^{(d,n)}
&=
\sum_{c,e\in\mathbb{Z}_d}
\varphi(c,e)
\ket{\Psi_{-c,-e}}
\bra{\Psi_{-c,-e}}_{S_1N_1}\\
&\quad\otimes W^{(S_2)}(c,e)
\otimes
\bigotimes_{i=3}^{n}
\left(W^{(N_i)}(c,e)\right)^{\mathsf T}.
\end{split}
\end{equation*}

Since $\tilde{U}_{dec}^{(d,n)}$ does not act on $N_2$, it may be
applied before $N_2$ is traced out:
\begin{equation*}
\begin{split}
&\tilde{U}_{dec}^{(d,n)}
\rho_{er}
\tilde{U}_{dec}^{(d,n)\dagger}\\
&=
\Tr_{N_2}
\left(
\ket{\Omega}\bra{\Omega}
\right),
\end{split}
\end{equation*}
where
\[
\ket{\Omega}
=
\left(
\tilde{U}_{dec}^{(d,n)}
\otimes\mathbb{I}_{N_2}
\right)
\ket{\Psi_{enc}}.
\]

For each term in $\ket{\Psi_{enc}}$, Bell orthogonality gives
\[
\ket{\Psi_{-c,-e}}
\bra{\Psi_{-c,-e}}
\ket{\Psi^{(\tau)}_{-a,-b}}
=
\delta_{c,a}\delta_{e,b}
\ket{\Psi^{(\tau)}_{-a,-b}}.
\]
Thus, the sums over $(c,e)$ collapse to $(c,e)=(a,b)$. When the
density operator is formed, this selection occurs independently on
the ket and bra terms, so no condition
$(a,b)=(a',b')$ is imposed.

The corresponding Weyl operators satisfy
\begin{equation*}
\begin{split}
\left(
W^{(S_2)}(a,b)\otimes\mathbb{I}_{N_2}
\right)
\ket{\Psi^{(\tau)}_{-a,-b}}_{S_2N_2}
&=
\ket{\Phi}_{S_2N_2},\\
\left(
\mathbb{I}_{S_i}\otimes
W^{(N_i)}(a,b)^{\mathsf T}
\right)
\ket{\Psi^{(\tau)}_{-a,-b}}_{S_iN_i}
&=
\ket{\Phi}_{S_iN_i},
\qquad i\geq3.
\end{split}
\end{equation*}
The factors $\varphi(a,b)$ and $\varphi(a,b)^{-1}$ also cancel.
Therefore,
\begin{equation*}
\begin{split}
\ket{\Omega}
&=
\frac{1}{d}
\sum_{a,b\in\mathbb{Z}_d}
W^{(A)}(a,b)\ket{\psi}_A
\otimes
\ket{\Psi^{(\tau)}_{-a,-b}}_{S_1N_1}\\
&\quad\otimes
\ket{\Phi}_{S_2N_2}
\otimes
\bigotimes_{i=3}^{n}
\ket{\Phi}_{S_iN_i}.
\end{split}
\end{equation*}

To evaluate the remaining sum, write
\[
\ket{\psi}_A
=
\sum_{k=0}^{d-1}\psi_k\ket{k}_A.
\]
Using $\tau^2=\omega$ and
\[
\sum_{b\in\mathbb{Z}_d}
\omega^{b(k-j+a)}
=
d\,\delta_{j,k+a},
\]
we obtain
\begin{equation*}
\begin{split}
&\frac{1}{d}
\sum_{a,b}
W^{(A)}(a,b)\ket{\psi}_A
\otimes
\ket{\Psi^{(\tau)}_{-a,-b}}_{S_1N_1}\\
&=
\frac{1}{d\sqrt d}
\sum_{a,b,k,j}
\psi_k\,
\omega^{b(k-j+a)}
\ket{k+a}_A
\ket{j-a}_{S_1}
\ket{j}_{N_1}\\
&=
\frac{1}{\sqrt d}
\sum_{a,k}
\psi_k
\ket{k+a}_A
\ket{k}_{S_1}
\ket{k+a}_{N_1}\\
&=
\ket{\Phi}_{AN_1}\otimes\ket{\psi}_{S_1}.
\end{split}
\end{equation*}
Hence,
\[
\ket{\Omega}
=
\ket{\psi}_{S_1}
\otimes
\ket{\Phi}_{AN_1}
\otimes
\ket{\Phi}_{S_2N_2}
\otimes
\bigotimes_{i=3}^{n}
\ket{\Phi}_{S_iN_i}.
\]

Finally, tracing out $N_2$ gives
\begin{equation*}
\begin{split}
&\tilde{U}_{dec}^{(d,n)}
\rho_{er}
\tilde{U}_{dec}^{(d,n)\dagger}\\
&=
\ket{\psi}\bra{\psi}_{S_1}
\otimes
\ket{\Phi}\bra{\Phi}_{AN_1}
\otimes
\frac{\mathbb{I}_{S_2}}{d}
\otimes
\bigotimes_{i=3}^{n}
\ket{\Phi}\bra{\Phi}_{S_iN_i}.
\end{split}
\end{equation*}
Therefore,
\[
\Tr_{AS_2S_3,\cdots,N_1,N_3,\cdots,N_n}
\left(
\tilde{U}_{dec}^{(d,n)}
\rho_{er}
\tilde{U}_{dec}^{(d,n)\dagger}
\right)
=
\ket{\psi}\bra{\psi}_{S_1},
\]
Hence the original input state $\ket{\psi}_A$ is recovered on $S_1$ despite
the loss of $N_2$, by replacing the missing noise qudit with the
signal qudit $S_2$.

\subsection*{Appendix F: SWAP operator for generalised dimension}

Let
\[
\mathcal{S}=\frac{1}{d}\sum_{a,b\in \mathbb{Z}_d} W(a,b)\otimes W(a,b)^\dagger .
\]
We check its action on \(\ket{i}\otimes \ket{j}\).

Using
\[
\Xop^a \Zop^b \ket{i}=\omega^{bi}\ket{i+a},\qquad \omega=e^{2\pi i/d},
\]
we get
\[
W(a,b)\ket{i}=\tau^{ab}\omega^{bi}\ket{i+a}.
\]

Also
\[
W(a,b)^\dagger \ket{j}
=\tau^{-ab}\omega^{-b(j-a)}\ket{j-a},
\]
so
\[
\bigl(W(a,b)\otimes W(a,b)^\dagger\bigr)\ket{i,j}
=\omega^{b(i-j+a)}\ket{i+a,j-a}.
\]

Therefore
\[
\mathcal{S}\ket{i,j}
=\frac{1}{d}\sum_{a,b}\omega^{b(i-j+a)}\ket{i+a,j-a}.
\]

Now use
\[
\sum_{b\in \mathbb{Z}_d}\omega^{bm}
=d\,\delta_{m,0 \!\!\!\pmod d}.
\]
So only the term with
\[
i-j+a\equiv 0 \pmod d
\]
survives, i.e.
\[
a\equiv j-i.
\]

Hence
\[
\mathcal{S}\ket{i,j}=\ket{j,i}.
\]
So indeed
\[
\mathcal{S}=\mathrm{SWAP}.
\].

\subsection*{Appendix G: Additional derivation for proof of Theorem 2 ($n = 2$)}

Let us compute the marginals of each group by further breaking them into subgroups. By symmetry, we will only need to compute 6 marginals.
\paragraph{Group 1: $\rho_{S_1N_1}$ (Within-pair)}
To find $\rho_{S_1N_1}$, we trace over the registers $\{A, S_2, N_2\}$. The partial trace imposes the following delta constraints:
\begin{itemize}
    \item Trace over $A$: $k+a = k'+a' \implies k' = k + (a-a')$. Let $c = a-a'$.
    \item Trace over $S_2$: $j_2-a = j_2'-a' \implies j_2' = j_2-c$.
    \item Trace over $N_2$: $j_2 = j_2'$.
\end{itemize}
Substituting $j_2' = j_2$ into $j_2' = j_2-c$ forces $c=0$, which implies $a=a'$ and $k=k'$. 
With $a=a'$, the phase $\tau^{a^2-a'^2} = 1$. The sum over $b, b'$ decouples from the spatial indices:
\begin{equation*}
\begin{split}
\rho_{S_1N_1}& = \frac{1}{d^4} \sum_{a,b,b',k,j_1,j_1',j_2} \tau^{b^2-b'^2} |\psi_k|^2 \omega^{b(k-j_1-j_2)}\\
&\omega^{-b'(k-j_1'-j_2)} |j_1-a, j_1\rangle \langle j_1'-a, j_1'| \\
&= \frac{1}{d^4} \sum_{a,j_1,j_1'}\left( \sum_k |\psi_k|^2 \right) \\
&\left( \sum_{b,b',j_2} \tau^{b^2-b'^2} \omega^{b(k-j_1-j_2)-b'(k-j_1'-j_2)} \right)\\
&\times|j_1-a, j_1\rangle \langle j_1'-a, j_1'|
\end{split}
\end{equation*}
Let $S$ represent the sum over the indices traced over or summed out during marginalisation ($j_2, b, b'$):
\[
S = \sum_{j_2=0}^{d-1} \sum_{b, b'=0}^{d-1} \tau^{b^2-b'^2} \omega^{b(k-j_1-j_2)-b'(k-j_1'-j_2)}
\]
Rearranging the exponents to group the terms by $j_2$:
\[
S = \sum_{b, b'=0}^{d-1} \tau^{b^2-b'^2} \omega^{b(k-j_1) - b'(k-j_1')} \left( \sum_{j_2=0}^{d-1} \omega^{(b'-b)j_2} \right)
\]

Using the orthogonality identity $\sum_{j_2=0}^{d-1} \omega^{(b'-b)j_2} = d\,\delta_{b,b'}$:
\[
S = \sum_{b, b'=0}^{d-1} \tau^{b^2-b'^2} \omega^{b(k-j_1) - b'(k-j_1')} \cdot (d\delta_{b,b'})
\]
The Kronecker delta $\delta_{b,b'}$ forces $b = b'$, leading to two critical simplifications:
\begin{enumerate}
    \item The quadratic phase becomes $\tau^{b^2-b^2} = \tau^0 = 1$.
    \item The signal index $k$ in the $\omega$ terms cancels: $\omega^{bk - bk} = 1$.
\end{enumerate}
The expression simplifies to:
\[
S = d \sum_{b=0}^{d-1} \omega^{b(j'_1 - j_1)}
\]

Applying the orthogonality identity again to the remaining sum over $b$:
\[
\sum_{b=0}^{d-1} \omega^{b(j'_1 - j_1)} = d\delta_{j_1, j_1'}
\]
Thus, the total internal sum collapses to:
\[
S = d \cdot d\delta_{j_1, j_1'} = d^2\delta_{j_1, j_1'}
\]

Substituting $S$ back into the marginal density matrix $\rho_{S_1N_1}$ and since $\sum_k |\psi_k|^2 = 1$:
\[
\rho_{S_1N_1} = \frac{1}{d^4} \sum_{a, j_1, j_1'} (d^2\delta_{j_1, j_1'}) |j_1-a, j_1\rangle \langle j_1'-a, j_1'|
\]
Applying the delta $\delta_{j_1, j_1'}$:
\[
\rho_{S_1N_1} = \frac{d^2}{d^4} \sum_{a=0}^{d-1} \sum_{j_1=0}^{d-1} |j_1-a, j_1\rangle \langle j_1-a, j_1|
\]
The double sum over $a$ and $j_1$ spans all $d^2$ basis states $|m, n\rangle$ in the $\mathbb{Z}_d \otimes \mathbb{Z}_d$ Hilbert space. Specifically, for each fixed $j_1$, as $a$ cycles through $\{0,1,\cdots ,d-1\}$, the first index $m = j_1-a \pmod d$ also cycles through $\{0,1,\cdots, d-1\}$. This yields the normalized identity matrix:
\[
\rho_{S_1N_1} = \frac{1}{d^2} \sum_{m=0}^{d-1} \sum_{n=0}^{d-1} |m, n\rangle \langle m, n| = \frac{1}{d^2} \mathbb{I}_{d^2}
\]
One can easily check $\rho_{S_2N_2}$ using the same argument.

\paragraph{Group 2a: $\rho_{AN_1}$ ($A$ + one noise qudit)}
To find $\rho_{AN_1}$, we trace over $\{S_1, S_2, N_2\}$. The partial trace yields the following Kronecker delta constraints from the traced indices:
\begin{itemize}
    \item Trace over $S_1$: $\delta_{j_1-a, j_1'-a'} \implies j_1' = j_1 - (a-a')$. Let $c = a-a'$.
    \item Trace over $S_2$: $\delta_{j_2-a, j_2'-a'} \implies j_2' = j_2 - c$.
    \item Trace over $N_2$: $\delta_{j_2, j_2'}$.
\end{itemize}
Combining $\delta_{j_2, j_2'}$ with $j_2' = j_2-c$ forces $c=0$. This implies:
\[ c=0 \implies a = a' \quad \text{and} \quad j_1' = j_1 \]
Substituting these back into the density matrix, the phase $\tau^{a^2-a'^2}$ becomes $\tau^0 = 1$. The sum becomes:
\begin{equation*}
\begin{split}
\rho_{AN_1} &= \frac{1}{d^4} \sum_{a,j_1,k,k',b,b',j_2} \tau^{b^2-b'^2} \psi_k \psi_{k'}^* \omega^{b(k-j_1-j_2)}\\
&\omega^{-b'(k'-j_1-j_2)} |k+a, j_1\rangle \langle k'+a, j_1|
\end{split}
\end{equation*}
Isolating the sum over $j_2$ to collapse $b, b'$:
\[ \sum_{j_2=0}^{d-1} \omega^{(b'-b)j_2} = d\delta_{b,b'} \]
With $b=b'$, the phase $\tau^{b^2-b'^2} = 1$. Now collapse $k, k'$ by summing over $b$:
\[ \sum_{b=0}^{d-1} \omega^{b(k-k')} = d\delta_{k,k'} \]
Under the condition $|\psi_k|^2 = 1/d$:
\begin{align*}
\rho_{AN_1} &= \frac{d^2}{d^4} \sum_{a,j_1,k} \frac{1}{d} |k+a, j_1\rangle \langle k+a, j_1| \\
&= \frac{1}{d^3} \sum_{a,k,j_1} |k+a, j_1\rangle \langle k+a, j_1|
\end{align*}
Since $m = k+a \pmod d$ spans $\mathbb{Z}_d$ for any $a$, and there are d choices of $a$:
\[ \rho_{AN_1} = \frac{d}{d^4} \cdot d \sum_{m,j_1} |m, j_1\rangle \langle m, j_1| = \frac{d^2}{d^4} \mathbb{I}_{d^2} = \frac{1}{d^2} \mathbb{I}_{d^2} \]

One can easily show this for $\rho_{AN_2}$.

\paragraph{Group 2b: $\rho_{AS_1}$ ($A$ + one signal qudit)}

To find $\rho_{AS_1}$, we trace over $\{S_2, N_1, N_2\}$. The trace constraints are:
\begin{itemize}
    \item Trace over $N_1$: $\delta_{j_1, j_1'}$.
    \item Trace over $N_2$: $\delta_{j_2, j_2'}$.
    \item Trace over $S_2$: $\delta_{j_2-a, j_2'-a'} \implies j_2-a = j_2-a' \implies a = a'$.
\end{itemize}
Just as before, $a=a'$ implies $c=0$. Since $\delta_{j_1, j_1'}$ is already given by the trace over $N_1$, we have $j_1 = j_1'$ and $a=a'$. The resulting sum is:
\begin{equation*}
\begin{split}
\rho_{AS_1} &= \frac{1}{d^4} \sum_{a,j_1,k,k',b,b',j_2} \tau^{b^2-b'^2} \psi_k \psi_{k'}^* \omega^{b(k-j_1-j_2)}\\
&\omega^{-b'(k'-j_1-j_2)} |k+a, j_1-a\rangle \langle k'+a, j_1-a|
\end{split}
\end{equation*}
The collapse of $b, b'$ (via $j_2$ sum) and $k, k'$ (via $b$ sum) gives:
\[ \rho_{AS_1} = \frac{d^2}{d^4} \sum_{a,k,j_1} |\psi_k|^2 |k+a, j_1-a\rangle \langle k+a, j_1-a| \]
Let $m = k+a$ and $n = j_1-a$. For a fixed $a$, the pair $(m, n)$ spans all $d^2$ basis states. Summing over $a$ gives a factor of d:
\[ \rho_{AS_1} = \frac{d^2}{d^4} \cdot \frac{1}{d} \cdot d \sum_{m,n} |m, n\rangle \langle m, n| = \frac{1}{d^2} \mathbb{I}_{d^2} \]

One can easily show for $\rho_{AS_2}$.

\paragraph{Group 3a: $\rho_{S_1S_2}$ (signal-signal pair)}

To compute $\rho_{S_1S_2}$, we trace over registers $\{A,N_1,N_2\}$.

The trace imposes the following constraints:

\begin{itemize}
\item Trace over $N_1$: $\delta_{j_1,j'_1}$
\item Trace over $N_2$: $\delta_{j_2,j'_2}$
\item Trace over $A$: $k+a = k'+a'$
\end{itemize}

Let
\[
c = a-a', \qquad k' = k+c, \qquad a' = a-c.
\]

The marginal becomes

\begin{equation*}
\begin{split}
\rho_{S_1S_2}
&=
\frac{1}{d^4}
\sum_{a,c,b,b',k,j_1,j_2}
\tau^{a^2-(a-c)^2+b^2-b'^2}
\psi_k \psi_{k+c}^*
\omega^{b(k-j_1-j_2)}\\
&\omega^{-b'(k+c-j_1-j_2)}
|j_1-a,j_2-a\rangle
\langle j_1-a+c,j_2-a+c|.
\end{split}
\end{equation*}

The quadratic phase simplifies:

\[
a^2-(a-c)^2 = 2ac-c^2
\]

so

\[
\tau^{a^2-(a-c)^2} = \tau^{2ac-c^2}.
\]

Assuming uniform input $|\psi_k|^2 = 1/d$,

\[
\frac{1}{d}
\sum_k
\omega^{(b-b')k}
=
\delta_{b,b'}.
\]

This consumes the $b'$ sum and removes the quadratic phase $\tau^{b^2-b'^2}$.

The density matrix becomes

\begin{equation*}
\begin{split}
\rho_{S_1S_2}
&=
\frac{1}{d^4}
\sum_{a,c,b,j_1,j_2}
\tau^{2ac-c^2}
\omega^{-bc}\\
&\times|j_1-a,j_2-a\rangle
\langle j_1-a+c,j_2-a+c|.
\end{split}
\end{equation*}

Now sum over $b$:

\[
\sum_{b=0}^{d-1}\omega^{-bc} = d\delta_{c,0}.
\]

This forces

\[
c=0 \Rightarrow a=a'.
\]

Substituting $c=0$ gives

\[
\rho_{S_1S_2}
=
\frac{d}{d^4}
\sum_{a,j_1,j_2}
|j_1-a,j_2-a\rangle
\langle j_1-a,j_2-a|.
\]

Let

\[
m=j_1-a,\qquad n=j_2-a.
\]

As $a,j_1,j_2$ run over $\mathbb{Z}_d$, the pair $(m,n)$ spans all basis states of $\mathbb{Z}_d\otimes\mathbb{Z}_d$ exactly d times.

Thus

\[
\sum_{a,j_1,j_2}
|j_1-a,j_2-a\rangle
\langle j_1-a,j_2-a|
=
d \mathbb{I}_{d^2}.
\]

Therefore

\[
\rho_{S_1S_2}
=
\frac{d}{d^4}\cdot d \mathbb{I}_{d^2}
=
\frac{d^2}{d^4}\mathbb{I}_{d^2}
=
\frac{1}{d^2}\mathbb{I}_{d^2}.
\]

Hence the signal-signal marginal is maximally mixed.

\paragraph{Group 3b: $\rho_{N_1N_2}$ (noise-noise pair)}

Tracing over $\{A,S_1,S_2\}$ gives $k'=k+c$, $j'_1=j_1-c$ and $j'_2=j_2-c$ respectively. The marginal becomes

\begin{equation*}
\begin{split}
\rho_{N_1N_2}
&=
\frac{1}{d^4}
\sum_{a,c,b,b',k,j_1,j_2}
\tau^{2ac-c^2+b^2-b'^2}
\psi_k\psi_{k+c}^*
\omega^{b(k-j_1-j_2)}\\
&\omega^{-b'(k-j_1-j_2+3c)}
|j_1,j_2\rangle
\langle j_1-c,j_2-c|.
\end{split}
\end{equation*}

Uniform input gives

\[
\frac{1}{d}\sum_k \omega^{(b-b')k}
=
\delta_{b,b'}.
\]

Thus

\[
\rho_{N_1N_2}
=
\frac{1}{d^4}
\sum_{a,c,b,j_1,j_2}
\tau^{2ac-c^2}\omega^{-3bc}
|j_1,j_2\rangle
\langle j_1-c,j_2-c|.
\]

Summing over $a$ gives factor d:
\[
\sum_a\tau^{2ac}=d\delta_{c,0}, \,\ 
\]
Therefore,

\[
\rho_{N_1N_2}
=
\frac{d}{d^4}
\sum_{c,b,j_1,j_2}
\tau^{-c^2}\omega^{-3bc}\delta_{c,0}
|j_1,j_2\rangle
\langle j_1-c,j_2-c|
\]

Now sum over $b$ gives a factor of $d$:

Thus

\[
\rho_{N_1N_2}
=
\frac{d}{d^4}\cdot d
\sum_{j_1,j_2}
|j_1,j_2\rangle\langle j_1,j_2|
=
\frac{d^2}{d^4}\mathbb{I}_{d^2}
=
\frac{1}{d^2}\mathbb{I}_{d^2}.
\]

\paragraph{Group 3c: $\rho_{S_1N_2}$ (cross-pair)}

Tracing over $\{A,S_2,N_1\}$ gives

\[
j'_2=j_2-c,\quad k'=k+c, \quad j_1=j'_1.
\]

The marginal becomes

\begin{equation*}
\begin{split}
\rho_{S_1N_2}
&=
\frac{1}{d^4}
\sum_{a,c,b,b',k,j_1,j_2}
\tau^{2ac-c^2+b^2-b'^2}
\psi_k\psi_{k+c}^*
\omega^{b(k-j_1-j_2)}\\
&\omega^{-b'(k-j_1-j_2+2c)}
|j_1-a,j_2\rangle
\langle j_1-a+c,j_2-c|.
\end{split}
\end{equation*}

Uniform input gives

\[
\frac{1}{d}\sum_k \omega^{(b-b')k}
=
\delta_{b,b'}.
\]

Thus

\begin{equation*}
\begin{split}
\rho_{S_1N_2}
&=
\frac{1}{d^4}
\sum_{a,c,b,j_1,j_2}
\tau^{2ac-c^2}
\omega^{-2bc}\\
&|j_1-a,j_2\rangle
\langle j_1-a+c,j_2-c|.
\end{split}
\end{equation*}

Summing over $b$:

\[
\sum_b \omega^{-2bc} = d\delta_{c,0}, \,\ \text{gcd}(2,d)=1
\]
when $d$ is odd.

This simplifies the sum and yields,

\[
\rho_{S_1N_2}
=
\frac{d}{d^4}
\sum_{a,j_1,j_2}
|j_1-a,j_2\rangle
\langle j_1-a,j_2|.
\]

Let

\[
m=j_1-a,\qquad n=j_2.
\]

This spans all $d^2$ basis states $d$ times:

\[
\sum_{a,j_1,j_2}
|j_1-a,j_2\rangle\langle j_1-a,j_2|
=
d\mathbb{I}_{d^2}.
\]

Thus

\[
\rho_{S_1N_2}
=
\frac{d}{d^4}\cdot d \mathbb{I}_{d^2}
=
\frac{1}{d^2}\mathbb{I}_{d^2}.
\]

When $d$ is even, we have

\[
\sum_{b=0}^{d-1} \omega^{-2bc} = d \left( \delta_{c,0} + \delta_{c, \frac{d}{2}} \right), \,\ \text{gcd}(2,d)=2.
\]
Plugging this back into the expression for $\rho_{S_1 N_2}$, the density matrix splits into two components: $\rho_{S_1 N_2} = \frac{1}{d^3} \left( \Sigma_{c=0} + \Sigma_{c=\frac{d}{2}} \right)$.
For $c=0$, the term behaves exactly as it does in the odd case:$$\Sigma_{c=0} = \sum_{a, j_1, j_2} \tau^{0} |j_1 - a, j_2\rangle \langle j_1 - a, j_2|$$Letting $m = j_1 - a$, as $j_1$ and $a$ run from $0$ to $d-1$, the index $m$ spans all $d$ basis states, and the sum over $a$ simply provides a redundancy factor of $d$. This gives:$$\Sigma_{c=0} = d \sum_{m, j_2} |m, j_2\rangle \langle m, j_2| = d \cdot \mathbb{I}_{d^2}$$
and hence $\rho_{S_1N_1}=\frac{1}{d^2}\mathbb{I}_{d^2}$. Now let us look at the extra term introduced when $d$ is even:$$\Sigma_{c=\frac{d}{2}} = \sum_{a, j_1, j_2} \tau^{2a(\frac{d}{2}) - (\frac{d}{2})^2} |j_1 - a, j_2\rangle \langle j_1 - a + \frac{d}{2}, j_2 - \frac{d}{2}|$$Let's simplify the $\tau$ phase factor first:$$\tau^{ad} = \left(e^{\frac{i\pi(d+1)}{d}}\right)^{ad} = e^{i\pi(d+1)a} = (-1)^{a}$$
as $d+1$ is odd. Therefore, $$\Sigma_{c=\frac{d}{2}} = \tau^{-d^2/4} \sum_{a, j_1, j_2} (-1)^a |j_1 - a, j_2\rangle \langle j_1 - a + \frac{d}{2}, j_2 - \frac{d}{2}|$$

 We then perform a change of variables where $m = j_1 - a$. For any fixed value of $a$ and $j_2$, as $j_1$ shifts from $0$ to $d-1$, $m$ still cycles through the complete set of basis states exactly once. Rewriting the sum using $m$:

 \begin{equation*}
 \begin{split}
 \Sigma_{c=\frac{d}{2}} &= \tau^{-\frac{d^2}{4}} \sum_{j_2=0}^{d-1} \underbrace{\sum_{a=0}^{d-1} (-1)^a}_{=0 \text{ ($d$ is even)}}\\ &\times \left( \sum_{m=0}^{d-1} |m, j_2\rangle \langle m + \frac{d}{2}, j_2 - \frac{d}{2}| \right)
 \end{split}
 \end{equation*}

Combining both parts back into your marginal expression:$$\rho_{S_1 N_2} = \frac{1}{d^3} \left( d \cdot \mathbb{I}_{d^2} + 0 \right) = \frac{1}{d^2} \mathbb{I}_{d^2}$$. Hence, both even and odd dimensions obtain the exact same maximally mixed state. By symmetry, one can also easily see that $\rho_{N1S_2}$ is also maximally mixed.

All in all, we need the condition of uniformity $\psi_k=\psi_{k'}$ when $k'\neq k$ for the state of register $A$ otherwise, those in group 3 will not become maximally mixed.

\subsection*{Appendix H: Alternative Proof of Theorem 2 ($n > 2$)}
\begin{proof}
It is sufficient to show that a particular marginal state of $n=3$, is not maximally mixed. Let us compute the marginals $\rho_{AS_1N_1}$. Here, we trace over $\{S_2,N_2,S_3,N_3\}$ and this gives:

\begin{itemize}
    \item Trace over $N_2$: $j_2=j'_2$
    \item Trace over $N_3$: $j_3=j'_3$
    \item Trace over $S_2$: $j'_2=j_2-c$
    \item Trace over $S_3$: $j'_3=j_3-c$
\end{itemize}
where $c=a-a'$. This means $c=0$. We take $\psi_k$ to be uniform, hence $\psi_k\psi_{k'}^*=|\psi_k|^2=1/d$. The marginals then becomes:

\begin{equation*}
\begin{split}
&\rho_{AS_1N_1} = \frac{1}{d^5} \sum_{a,b,b',k,k',j_1,j_1',j_2,j_3} \tau^{b^2-b'^2} |\psi_k|^2 \omega^{b(k-j_1-j_2-j_3)}\\
&\times \omega^{-b'(k'-j_1'-j_2-j_3)}|k+a,j_1-a, j_1\rangle \langle k'+a,j_1'-a, j_1'| 
\end{split}
\end{equation*}

Since $j_2,j_3$ are free variables not in the braket, we can sum over $j_2,j_3$ to enforce $b=b'$.

\[
\sum_{j_2,j_3}\omega^{(j_2+j_3)(b'-b)}= d^2\delta_{b,b'}
\]

This simplifies $\rho_{AS_1N_1}$ to:
\begin{equation*}
\begin{split}
\rho_{AS_1N_1} &= \frac{1}{d^3} \sum_{a,b,k,k',j_1,j_1'}|\psi_k|^2 \omega^{b(k-k'-j_1+j'_1)}\\
&|k+a,j_1-a, j_1\rangle \langle k'+a,j_1'-a, j_1'|. 
\end{split}
\end{equation*}

Summing over $b$ yields:

\begin{equation*}
\sum_b \omega^{b(k-k'-j_1+j'_1)}=d\,\delta_{k-k',j_1-j'_1}
\end{equation*}

Unlike the AME$(5,d)$ case where we do not have such marginals with $A$ and same pairings, this is a single constraint on a four variable system ($k, j_1, k', j'_1$). It only requires the sum of the indices to match, which allows for off-diagonal terms where $k\neq k'$, $j'_1\neq j'_1$ but $k-j_1=k'-j'_1$ which hinders it from becoming a maximally mixed state. To see this, let $d=2$:

\begin{equation*}
\begin{split}
\rho_{AS_1N_1}&=\frac{1}{8}\sum_{a,k,k',j_1,j'_1\in Z_2}\delta_{k-k',j_1-j'_1}\\
&\times|k+a,j_1-a, j_1\rangle \langle k'+a,j_1'-a, j_1'|. 
\end{split}
\end{equation*}
one can easily check that this is in fact equal to:

\begin{equation*}
    \rho_{AS_1N_1} = \frac{1}{8} (\mathbb{I}_8 + X \otimes X \otimes X)
\end{equation*}
which is clearly not a maximally mixed state.

\end{proof}

\subsection*{Appendix I: Fourier Covariance of the $n=2$ Encryption Unitary}

\begin{proof}
Let $U_{enc}^{(d,2)}$ act on the three registers $A,S_1,S_2$. We first prove the result for
$r=1$; the general result then follows by taking the $r$-th power.

For $n=2$, the encryption unitary is
\begin{equation*}
\begin{split}
U_{enc}^{(d,2)}
&=
\frac{1}{d}
\sum_{a,b\in\mathbb Z_d}
\varphi^{-1}(a,b)\,
W^{(A)}(a,b)\otimes \\
&\bigotimes^2_{i=1}\left(W^{(S_i)}(a,b)\right)^\dagger,
\end{split}
\end{equation*}
where
\begin{equation*}
\varphi^{-1}(a,b)
=
\tau^{a^2+b^2-3ab}.
\end{equation*}
Using
\begin{equation*}
W(a,b)\ket{x}
=
\tau^{ab}\omega^{bx}\ket{x+a}
\end{equation*}
and
\begin{equation*}
W(a,b)^\dagger\ket{y}
=
W(-a,-b)\ket{y}
=
\tau^{ab}\omega^{-by}\ket{y-a},
\end{equation*}
we obtain
\begin{align*}
U_{enc}^{(d,2)}\ket{x,y,z}
&=
\frac{1}{d}
\sum_{a,b\in\mathbb Z_d}
\tau^{a^2+b^2}
\omega^{b(x-y-z)}\\
&\times \ket{x+a,y-a,z-a}.
\label{eq:U_basis_action}
\end{align*}

Now define the local unitary
\begin{equation*}
Q
=
W^{(A)}(1,1)\otimes
W^{(S_1)}(-1,0)\otimes
W^{(S_2)}(-1,0).
\label{eq:local_covariance_operator}
\end{equation*}
Because
\begin{equation*}
W(1,1)\ket{m}
=
\tau\omega^m\ket{m+1},
\quad
W(-1,0)\ket{m}
=
\ket{m-1},
\end{equation*}
applying $Q$ to $U_{enc}^{(d,2)}\ket{x,y,z}$ gives
\begin{align*}
QU_{enc}^{(d,2)}&\ket{x,y,z}
=
\frac{1}{d}
\sum_{a,b\in\mathbb Z_d}
\tau^{a^2+b^2}
\omega^{b(x-y-z)}
\tau\omega^{x+a}
\nonumber\\
&\qquad\qquad\times
\ket{x+a+1,y-a-1,z-a-1}.
\label{eq:QU_before_reindexing}
\end{align*}

Set
\begin{equation*}
a'=a+1,
\qquad
a=a'-1.
\end{equation*}
$QU_{enc}^{(d,2)}\ket{x,y,z}$ becomes
\begin{align*}
QU_{enc}^{(d,2)}\ket{x,y,z}
&=
\frac{1}{d}
\sum_{a',b\in\mathbb Z_d}
\tau^{(a'-1)^2+b^2+1}
\omega^{b(x-y-z)}
\omega^{x+a'-1}
\nonumber\\
&\qquad\qquad\times
\ket{x+a',y-a',z-a'}.
\label{eq:QU_after_reindexing}
\end{align*}
Since $\omega=\tau^2$,
\begin{align*}
\tau^{(a'-1)^2+b^2+1}\omega^{x+a'-1}
&=
\tau^{a'^2-2a'+1+b^2+1}
\tau^{2x+2a'-2}
\nonumber\\
&=
\tau^{a'^2+b^2}\tau^{2x}
\nonumber\\
&=
\omega^x\tau^{a'^2+b^2}.
\end{align*}
Therefore,
\begin{align*}
QU_{enc}^{(d,2)}&\ket{x,y,z}
=
\omega^x
\frac{1}{d}
\sum_{a',b\in\mathbb Z_d}
\tau^{a'^2+b^2}
\omega^{b(x-y-z)}
\\
&\times\ket{x+a',y-a',z-a'}=
\omega^x U_{enc}^{(d,2)}\ket{x,y,z}.
\end{align*}
On the other hand,
\begin{equation*}
\Zop_A\ket{x,y,z}
=
\omega^x\ket{x,y,z}.
\end{equation*}
Hence
\begin{equation*}
QU_{enc}^{(d,2)}\ket{x,y,z}
=
U_{enc}^{(d,2)}\Zop_A\ket{x,y,z}.
\end{equation*}
Because this holds for every computational basis state,
\begin{equation*}
U_{enc}^{(d,2)}\Zop_A=QU_{enc}^{(d,2)}.
\end{equation*}
Multiplying on the right by $U_{enc}^{(d,2)\dagger}$ gives
\begin{equation*}
U_{enc}^{(d,2)}\Zop_AU_{enc}^{(d,2)\dagger}
=
Q
=
W^{(A)}(1,1)\otimes
W^{(S_1)}(-1,0)\otimes
W^{(S_2)}(-1,0).
\label{eq:single_power_covariance}
\end{equation*}

Applying $Q$ repeatedly,
\begin{equation*}
U_{enc}^{(d,2)}\Zop_A^rU_{enc}^{(d,2)\dagger}
=
\left(
U_{enc}^{(d,2)}\Zop_AU_{enc}^{(d,2)\dagger}
\right)^r
=
Q^r.
\end{equation*}
Consequently,
\begin{equation*}
\begin{split}
U_{enc}^{(d,2)}
&\left(
\Zop_A^r\otimes
\mathbb I_{S_1}\otimes
\mathbb I_{S_2}
\right)
U_{enc}^{(d,2)\dagger}\\
&=
\left[
W^{(A)}(1,1)\otimes
W^{(S_1)}(-1,0)\otimes
W^{(S_2)}(-1,0)
\right]^r.
\label{eq:fourier_covariance_full}
\end{split}
\end{equation*}
\end{proof}

\end{document}